\documentclass[11pt,reqno]{amsart}

\usepackage{soul,times,amssymb,amsthm,amsmath,amsfonts,bbm,mathrsfs,xcolor,subfigure,graphicx,enumitem,booktabs,euscript,nicefrac}
\interdisplaylinepenalty=5000
\allowdisplaybreaks
\usepackage{graphicx}
\usepackage[unicode]{hyperref}
\hypersetup{
    colorlinks,
    linkcolor={blue!80!black},
    citecolor={blue!80!black},
    urlcolor={blue!80!black}
}
\usepackage[normalem]{ulem}
\usepackage[font=footnotesize]{caption}
\usepackage[utf8]{inputenc}
\usepackage[T1]{fontenc}
\usepackage{xargs}
\usepackage[colorinlistoftodos,prependcaption]{todonotes}
\usepackage[left=1in,right=1in,top=1.3in,bottom=1.3in]{geometry}
\setlength{\parskip}{3pt}

\newtheorem{theorem}{Theorem}[section]
\newtheorem{lemma}[theorem]{Lemma}

\newtheorem{proposition}[theorem]{Proposition}

\newtheorem{construction}{Construction}[section]
\theoremstyle{remark}
\newtheorem{remark}{Remark}

\newcommand\nc\newcommand
\nc{\bfsl}{\bfseries\slshape}
\nc{\bfit}{\bfseries\itshape}
\nc{\sfsl}{\sffamily\slshape}
\nc{\dfn}{\sffamily\slshape\small}

\nc\bfa{{\boldsymbol a}}\nc\bfA{{\boldsymbol A}}\nc\cA{{\EuScript A}}
\nc\bfb{{\boldsymbol b}}\nc\bfB{{\boldsymbol B}}\nc\cB{{\EuScript B}}
\nc\bfc{{\boldsymbol c}}\nc\bfC{{\boldsymbol C}}\nc\cC{{\mathscr C}}
\nc\bfd{{\boldsymbol d}}\nc\bfD{{\boldsymbol D}}\nc\cD{{\EuScript D}}
\nc\bfe{{\boldsymbol e}}\nc\bfE{{\boldsymbol E}}\nc\cE{{\EuScript E}}
\nc\bff{{\boldsymbol f}}\nc\bfF{{\boldsymbol F}}\nc\cF{{\mathscr F}}
\nc\bfg{{\boldsymbol g}}\nc\bfG{{\boldsymbol G}}\nc\cG{{\EuScript G}}
\nc\bfh{{\boldsymbol h}}\nc\bfH{{\boldsymbol H}}\nc\cH{{\mathcal H}}
\nc\bfi{{\boldsymbol i}}\nc\bfI{{\boldsymbol I}}\nc\cI{{\mathcal I}}
\nc\bfj{{\boldsymbol j}}\nc\bfJ{{\boldsymbol J}}\nc\cJ{{\EuScript J}}
\nc\bfk{{\boldsymbol k}}\nc\bfK{{\boldsymbol K}}\nc\cK{{\EuScript K}}
\nc\bfl{{\boldsymbol l}}\nc\bfL{{\boldsymbol L}}\nc\cL{{\EuScript L}}
\nc\bfm{{\boldsymbol m}}\nc\bfM{{\boldsymbol M}}\nc\cM{{\EuScript M}}
\nc\bfn{{\boldsymbol n}}\nc\bfN{{\boldsymbol N}}\nc\cN{{\EuScript N}}
\nc\bfo{{\boldsymbol o}}\nc\bfO{{\boldsymbol O}}\nc\cO{{\EuScript O}}
\nc\bfp{{\boldsymbol p}}\nc\bfP{{\boldsymbol P}}\nc\cP{{\EuScript P}}
\nc\bfq{{\boldsymbol q}}\nc\bfQ{{\boldsymbol Q}}\nc\cQ{{\EuScript Q}}
\nc\bfr{{\boldsymbol r}}\nc\bfR{{\boldsymbol R}}\nc\cR{{\EuScript R}}
\nc\bfs{{\boldsymbol s}}\nc\bfS{{\boldsymbol S}}\nc\cS{{\EuScript S}}
\nc\bft{{\boldsymbol t}}\nc\bfT{{\boldsymbol T}}\nc\cT{{\EuScript T}}
\nc\bfu{{\boldsymbol u}}\nc\bfU{{\boldsymbol U}}\nc\cU{{\EuScript U}}
\nc\bfv{{\boldsymbol v}}\nc\bfV{{\boldsymbol V}}\nc\cV{{\mathscr V}}
\nc\bfw{{\boldsymbol w}}\nc\bfW{{\boldsymbol W}}\nc\cW{{\mathscr W}}
\nc\bfx{{\boldsymbol x}}\nc\bfX{{\boldsymbol X}}\nc\cX{{\EuScript X}}
\nc\bfy{{\boldsymbol y}}\nc\bfY{{\boldsymbol Y}}\nc\cY{{\mathscr Y}}
\nc\bfz{{\boldsymbol z}}\nc\bfZ{{\boldsymbol Z}}\nc\cZ{{\EuScript Z}}

\nc{\1}{{\mathbbm{1}}}
\nc\mC{{\mathcal C}}
\nc\bmC{{\boldsymbol{\mathcal C}}}
\nc\rr{{\mathbb R}}
\nc\ee{{\mathbb E}}
\nc\sS{{\mathcal S}}
\nc{\integers}{{\mathbb Z}}
\nc{\ff}{{\mathbb F}}
\nc{\ii}{{\mathbb I}}
\nc{\nn}{{\mathbb N}}
\nc{\sC}{{\mathfrak C}}
\nc{\sH}{{\mathfrak H}}
\nc{\sL}{{\mathfrak L}}
\nc\hH{{\mathsf H}}
\nc\gG{{\mathsf G}}
\nc\bfPi{{\boldsymbol{\Pi}}}
\nc{\remove}[1]{}
\nc\Renyi{R{\'e}nyi }
\nc\tbt{\textbullet\quad}
\nc\ffm{\ff_2[x_1, x_2, \dots, x_m]}

\DeclareSymbolFont{bbold}{U}{bbold}{m}{n}
\DeclareSymbolFontAlphabet{\mathbbold}{bbold}

\DeclareMathOperator{\NC}{NC}
\DeclareMathOperator{\BSC}{BSC}

\nc{\genstirlingI}[3]{ \genfrac{[}{]}{0pt}{#1}{#2}{#3}}
\nc{\genstirlingII}[3]{\genfrac{\{}{\}}{0pt}{#1}{#2}{#3}}
\nc{\stirlingI}[2]{\genstirlingI{}{#1}{#2}}
\nc{\dstirlingI}[2]{\genstirlingI{0}{#1}{#2}}
\nc{\tstirlingI}[2]{\genstirlingI{1}{#1}{#2}}
\nc{\stirlingII}[2]{\genstirlingII{}{#1}{#2}}
\nc{\dstirlingII}[2]{\genstirlingII{0}{#1}{#2}}
\nc{\tstirlingII}[2]{\genstirlingII{1}{#1}{#2}}

\nc\todoin[2][]{\todo[inline, caption={2do}, #1]{
		\begin{minipage}{\textwidth-4pt}#2\end{minipage}}}
\nc{\mynote}[2]{{\color{#1} \marginpar{\tiny #2}}}
\nc{\mybignote}[2]{{\color{#1} $\langle \langle$ #2$\rangle \rangle$}}
\newcommandx{\rednote}[2][1=]{\todo[linecolor=red,backgroundcolor=red!25,bordercolor=red,#1]{#2}}
\newcommandx{\bluenote}[2][1=]{\todo[linecolor=blue,backgroundcolor=blue!25,bordercolor=blue,#1]{#2}}
\newcommandx{\yellownote}[2][1=]{\todo[linecolor=yellow,backgroundcolor=yellow!25,bordercolor=yellow,#1]{#2}}
\newcommandx{\greennote}[2][1=]{\todo[inline,linecolor=olive,backgroundcolor=green!25,bordercolor=olive,#1]{#2}}

\nc{\new}[1]{{\color{blue} #1}}
\nc{\rmark}[1]{{\color{red} #1}}
\nc{\pmark}[1]{\textcolor[rgb]{0.50,0.00,1.0}{ #1}}
\nc{\brmark}[1]{\textcolor[rgb]{0.63,0.31,0.2}{ #1}}
\nc{\dgmark}[1]{\textcolor[rgb]{0.00,0.8,0.4}{ #1}}
\nc{\mmark}[1]{{\color{magenta} #1}}
\nc{\ab}[1]{{\footnotesize\color{red}{[AB: #1]}}}
\nc{\mad}[1]{{\footnotesize\color{blue}{[MP: #1]}}}
\nc\redout{\bgroup\markoverwith{\textcolor{red}{\rule[0.5ex]{2pt}{0.8pt}}}\ULon}
\nc\blueout{\bgroup\markoverwith{\textcolor{blue}{\rule[0.5ex]{2pt}{0.8pt}}}\ULon}


\begin{document}

\title{Regular LDPC codes on BMS wiretap channels: Security bounds}
\author[]{Madhura Pathegama}\thanks{The authors are with the Department of ECE and Institute for Systems Research, University of Maryland, College Park, MD 20742. Emails: \{madhura,abarg\}@umd.edu. Research supported the US NSF grant CCF2330909.}
\author[]{Alexander Barg}

\begin{abstract} 
We improve the secrecy guarantees for transmission over general binary memoryless symmetric wiretap channels that relies on regular LDPC codes. 
Previous works showed that LDPC codes achieve secrecy capacity of some classes of wiretap channels while leaking $o(n)$ bits
of information over $n$ uses of the channel. In this note, we improve the security component of these results by reducing the
leakage parameter to $O(\log^2 n)$.  While this result stops short of proving \emph{strong security}, it goes beyond the general 
secrecy guarantees derived from properties of capacity-approaching code families.
\end{abstract}

\maketitle

\section{Introduction}

\subsection{The wiretap channel} Wyner's model of the wiretap channel \cite{wyner1975wire} comprises three terminals, $A$, $B$, and $E$, of which $A$ communicates with $B$ by transmitting a message $\bfM$ selected from a finite set $\cM$  over a channel $\cW_B$, while the eavesdropper $E$ observes the communication through another channel, $\cW_E$. We will assume that both $\cW_B$ and $\cW_E$ are binary-input memoryless symmetric (BMS) channels. A message $\bfM\in\cM$ is encoded into a subset of $\ff_2^n$, and a random bit sequence, $\bfX$, chosen from it is sent from $A$ to $B$ in $n$ uses of the channel $\cW_B.$ Terminal $B$ observes the sequence $\bfY$ while terminal $E$ observes the sequence $\bfZ$, obtained on the output of $\cW_E$. The encoding is constructed to support reliable transmission to $B$ and at the same time to avoid revealing information to $E$. The reliability requirement amounts to the condition $\Pr(\bfM \neq \hat{\bfM}) \to 0$ as $n\to\infty$, where $\hat{\bfM}$ is the estimate of $\bfM$ made by $B$, and where we assume some prior distribution on $\cM$. To ensure secrecy in communication, the {\em information leakage} $I(\bfM; \bfZ)$ must be small. 
Wyner \cite{wyner1975wire} defined the secrecy condition as $\frac{1}{n} I(\bfM; \bfZ) \to 0$. Maurer \cite{maurer1994strong} observed that from the cryptographic perspective, a more meaningful requirement is $ I(\bfM; \bfZ) \to 0$ without normalizing by the number of channel uses.
The normalized condition is now referred to as {\em weak secrecy}, while the unnormalized condition is known as {\em strong secrecy}. Denote by $R=\frac 1n\log|\cM|$ the transmission rate. The {\em secrecy capacity} $C_s(\cW_B, \cW_E)$ is defined as the supremum of the rates that permit reliable transmission and
support the secrecy condition. The value of $C_s(\cW_B, \cW_E)$ does not depend on the type of secrecy
used in the assumptions, in other words, it does not decrease by moving from weak to strong secrecy. We refer to \cite[Sec. 17.2]{csiszar2011information} for a formal discussion of
the $(\cW_B,\cW_E)$ wiretap channel model. 

\subsection{Capacity, secrecy, and constructive coding schemes} Designing transmission schemes for the wiretap channel $(\cW_B,\cW_E)$ faces two separate but related tasks, supporting reliable communication $A\to B$ and hiding the message from $E$. In this paper we focus on the second of these problems, examining the secrecy component of transmitting with regular LDPC codes. Early works showed that random coding achieves secrecy capacity of the wiretap channel under both weak and strong security \cite{wyner1975wire,csiszar1978broadcast}. 
It is more difficult to find constructive coding schemes that match this performance. With few exceptions (e.g., \cite{hayashi2010construction,bellare2012semantic,cheraghchi2011invertible}), such schemes rely on Wyner's {\em coset coding scheme}. This scheme is constructed using two linear codes, $\mC_E$ and $\mC_B$, such that $\mC_E \subset \mC_B$ and $\cM = \mC_B/\mC_E$. In this {\em nested construction}, each message $m$ is encoded into a coset in $\mC_B/\mC_E$, and the sequence transmitted by terminal $A$ is chosen as a uniform random vector from the coset. 

Message recovery at terminal $B$ is ensured by designing the code $\mC_B$ that supports reliable transmission the channel $\cW_B$. This requires that the rate of $B$ satisfy $R(\mC_B) \leq C(\cW_B)$.
On the other hand, as shown in \cite{bloch2011achieving,luzzi2011capacity}, achieving strong secrecy requires that $R(\mC_E) \geq C(\cW_E)$.
Let us emphasize that it is not enough to simply choose a code $\mC_E$ of such rate, what is needed is a close approximation of the output distribution of $\cW_E$ that corresponds to the uniform distribution on its input \cite{bloch2013strong}.
This property is also known as {\em channel resolvability} \cite{han1993approximation}, and it has found applications in the context of secret communication; see the cited works as well as \cite{hayashi2006general}. 
The smallest code rate that permits resolvability is a characteristic of the channel, called resolvability threshold.
By a fundamental result of \cite{han1993approximation}, if closeness to the capacity-achieving output distribution is measured by the KL divergence, this threshold equals Shannon capacity of the channel, and it is achieved by random codes. 
Later works have studied resolvability thresholds for the case when KL divergence is replaced by \Renyi divergence $D_\alpha$ of 
order $\alpha>1$, representing a more stringent constraint; see
\cite{yu2018renyi,pathegama2023smoothing} for recent work on \Renyi\!-based secrecy. 
If $\cW_E$ is a binary symmetric channel (BSC), the capacity-achieving
output distribution is also uniform. 
In this case resolvability is often called {\em code smoothing}
\cite{pathegama2023smoothing}, inspired by the term adopted in cryptographic applications \cite{debris2023smoothing,pathegama2024LPN}.

Most known constructive transmission schemes for the wiretap channel rely on LDPC and polar codes, which have a natural nested structure. While weak secrecy can be achieved based on the capacity -achieving properties of the code
ensembles \cite{andersson2010nested,hof2010secrecy,thangaraj2007applications,rathi2009two}, for strong secrecy one needs to show the smoothing (more precisely, resolvability) property  of the code family, which is often a nontrivial task. For polar codes on the wiretap channel this problem was sidestepped in \cite{gulcu2016achieving,chou2016polar,mahdavifar2011achieving} by allowing terminals $A$ and $B$ to share
a vanishing amount of information, and thus departing from the transmission scheme considered here. 
At the same time, the situation is less well understood for LDPC codes, where only partial results are available. In particular, 
the authors of \cite{thangaraj2007applications} showed that it is possible to approach channel capacity with weak security when both $\cW_B$ and $\cW_E$ are binary erasure channels channels (BECs). 
Their proof leveraged the inherent (weak) resolvability-achieving properties of capacity-achieving codes. Several other results for LDPC codes were obtained in \cite{rathi2009two,Subramanian2010,rathi2012performance}. Schemes based on {\em dual} LDPC codes were employed for 
the wiretap channel with noiseless channel $\cW_B$ and a BEC $\cW_E$ \cite{thangaraj2007applications,Suresh2010,subramanian2011strong}.

Unlike previous constructions based on nested LDPC codes, which focused mostly on erasure wiretap channels, we establish security guarantees for general BMS wiretap channels. When specialized to the erasure case, our bounds are tighter than the results available in the 
literature.

\vspace*{-.1in}
\subsection{Our results} 
We show that for BMS wiretap settings
$(\cW_B, \cW_E)$, it is possible to design nested binary LDPC codes with rates close to $C(\cW_B) - C(\cW_E)$, (given that $C(\cW_B) > C(\cW_E)$) while ensuring reliable communication and limiting the information leakage to $I(\bfM; \bfZ) = O(\log^2 n)$. If $\cW_B$ is {\em less noisy} than $\cW_E$ 
(see \cite{korner1975comparison,csiszar2011information} for the definition), the author of \cite{van1997special} 
showed that $C_s(\cW_B,\cW_E) = C(\cW_B) - C(\cW_E)$. In such case, our results show that for rates just below the wiretap capacity, it is possible to achieve reliability and small leakage using nested LDPC codes.

For comparison, we note that natural properties of capacity-achieving code families used on the wiretap channel result in information leakage
on the order of $o(n)$. This can be improved to $O(\sqrt n)$ with the second-order results for resolvability that parallel the
results of \cite{polyanskiy2010channel} for communication. Breaking through this barrier requires reliance on specific properties
of codes rather than general theorems. In this paper we prove a tighter estimate for resolvability of LDPC codes based on the
properties of the regular ensemble. This result still stops short of proving strong secrecy, but represents an improvement over the previously known general bounds. 

It suffices to consider the case where eavesdropper's channel $\cW_E$ is a BSC because it extends to the general case using channel comparison results \cite[Prob. 6.18]{csiszar2011information},  \cite{csacsouglu2011polar}.
The BSC case is addressed by assuring that the marginal distribution induced by any coset of $\mathcal{C}_E$ on the output of $\cW_E$ is nearly uniform. 
Formally this amounts to showing that with high probability, the KL divergence satisfies $D(P_{\bfZ\|\bfM=m} \| P_{\bfU_n}) = O(\log^2 n)$, where $P_{\bfU_n}$ is a uniform distribution on $\ff_2^n$.
To achieve this result, we prove a stronger bound by showing that the \Renyi divergence of order $\alpha$ satisfies $D_\alpha(P_{\bfZ|\bfM=m} \| P_{\bfU_n}) = O(\log^2 n)$, where $\alpha > 1$ is sufficiently close to 1. In terms of
the above discussion, we analyze smoothing properties of LDPC codes relying on the \Renyi divergence.

To implement this program, we start with a BMS channel $\cW_B$ and a BSC $\cW_E$.
We set $\mC_B$ to be an LDPC code from Gallager's ensemble with rate slightly below the capacity $C(\mathcal{W}_B)$ 
and choose another LDPC code $\mathcal{C}_E \subset \mathcal{C}_B$ with rate slightly above $C(\cW_E)$. We then show, in terms of \Renyi divergence bounds,
that the random syndrome of the LDPC code is approximately uniform, which amounts to the secrecy claim.

Since we focus on the security part of the transmission scheme and do not address efficient decoding, we
did not attempt to analyze uniformity properties of irregular LDPC codes which attain channel capacity
under belief propagation decoding, which remains an open question. 
With regard to the analysis of the information leakage, we make use of the \Renyi divergence $D_\alpha$, which is a new element in the derivation of the bound. Previously \Renyi divergence was utilized in \cite{hayashi2011exponential} for random codes, while here we use it
for structured codes or code ensembles.

\section{Preliminaries}\label{sec:preliminaries}

In this work, random variables are denoted by capital boldface letters. For a finitely supported random variable $\bfV$, we denote its probability mass function by $P_\bfV$. If $\bfV$ follows a probability distribution $P$, we write $\bfV \sim P$ to indicate that $P_\bfV = P$. If $\bfV$ is uniformly distributed over a finite set $\cA$, we denote this by $\bfV \sim_U \cA$ 
or 
$\bfV \sim P_\cA$. 
For two independent random vectors $\bfV,\bfV'$ in $\ff_2^n$, the distribution of their sum
is given by a convolution 
    $$
    P_{\bfV + \bfV'}(v) =( P_{\bfV}\ast P_{\bfV'})(v) = \sum_{v' \in \ff_2^n}P_{\bfV}(v-v')P_{\bfV'}(v').
    $$  
A BMS channel $\cW:X=\{0,1\}\to Y$ is a stochastic mapping from the two-element set to an output alphabet $Y$ which can be finite or infinite. In the finite case, $\cW$ amounts to a conditional probability distribution $P(y|x), y\in Y, x\in X$ such that the $2\times |Y|$ matrix $P(y|x)$ has certain permutation invariance \cite[p.178]{richardson2008modern}. For $Y=\rr$ and continuous distributions, these conditions are replaced with pdf's. Denote the Shannon capacity of the channel by $C(\cW)$, and denote by $P_{\cW}^{\text{ML}}(\mC)$ the error probability of maximum likelihood
decoding of a binary code $\mC$ used for transmission over $\cW$.

Below the symbols $\bfM,\bfX,\bfY$ and $\bfZ$ are reserved to denote the uncoded message, the coded message, and the observations of $B$ and $E$, respectively.
For the wiretap channel $(\cW_B,\cW_E)$ and a coset coding scheme $(\mC_B,\mC_E$), denote by $P_{(\cW_B,\cW_E)}^{\text{ML}}(\mC_B/\mC_E)$ the error probability of estimating the message (coset of $\mC_B/\mC_E$) by $B$ from the observation $\bfY$.
Clearly,  $P_{(\cW_B,\cW_E)}^{\text{ML}}(\mC_B/\mC_E)\le P_{\cW_B}^{\text{ML}}(\mC_B)$, the error probability of identifying the nearest codeword
of $\mC_B$. 
Denote by $$L(\mC_B,\mC_E):=I(\bfM;\bfZ)$$ the information leakage in this setting. Denote the rate of message transmission in the
nested scheme by $R(\mC_B,\mC_E):=R(\mC_B)-R(\mC_E)$, where $R(\cdot)$ is the rate of the code argument.

\subsection{\Renyi divergences} For two discrete distributions $P$ and $Q (P \ll Q)$ defined on the set $\cX$, and for $\alpha \in (1, \infty)$, the \Renyi divergence is defined as
\begin{align}
    D_{\alpha}(P\|Q)=\frac{1}{\alpha -1} \log \sum_x P(x)^{\alpha} Q(x)^{-(\alpha -1)},
\end{align}
where the logarithm is taken base 2.
In the limit of $\alpha\to 1$, $D_{\alpha}$ becomes the KL divergence $D(P\|Q)$.
Note that $D_\alpha$ is monotone increasing, i.e., for $1\le \alpha<\alpha'$ we have $D_\alpha(P\|Q)\le D_{\alpha'}(P\|Q)$. 

If $Q$ is uniform, then $D_\alpha(P\|Q)=\log|\cX|-H_{\alpha}(P)$, where $H_\alpha(P)$ is the \Renyi entropy of order $\alpha$, defined as:
\begin{align}
   H_{\alpha} (P)&=\frac1{1-\alpha}\log\Big(\sum_x P_X(x)^{\alpha} \Big) \quad\quad \alpha \in(1,\infty ). \label{eq:Renyi}
 \end{align}  
For $\alpha = 1$, $H_{\alpha} (P)$ reduces to the Shannon entropy $H(P)$.
When $P$ is supported on two points, i.e. $P = (p,1-p)$ for some $p\in [0,1]$, we use the notation $h_\alpha(p)$ and $h(p)$ for the \Renyi and Shannon entropies.

\subsection{LDPC codes} 
To define Gallager's LDPC ensemble $\cG(n,R,s)$ \cite{gallager1963low}, we fix the code length $n$, rate $R$, and the row weight $s$ (the number of 1's in each row of the parity-check matrix $\bfH$).

We will describe the construction of $\bfH$. Let $(F_i)_{i=1}^{n/s}$ be a collection of binary matrices of size $n/s \times s$, where each $F_i$ has ones in $i$-th row and zeros elsewhere, and let
$ F \in \ff_2^{(n/s) \times n} $ be given by
\begin{align}\label{eq: Fm}
    F = [F_1 \mid F_2 \mid \dots \mid F_{n/s}].
\end{align}
Next, choose $(1-R)s$ independent random $(n/s\times n)$ permutation matrices $\bfPi_i$ and let
$\bfH_i=F\bfPi_i, i=1,\dots,(1-R)s$. Finally, let $\bfH=(\bfH_1^\intercal,\dots,\bfH_{(1-R)s}^\intercal)^\intercal$ be the target $(1-R)n\times n$ parity-check matrix.
\begin{remark}
The value of $s$ is related to the gap to channel capacity when the codes are used on a BMS channel. Bounds on the density of ones in $\bfH$ as a function of gap to the channel capacity for belief propagation decoding have been studied in detail, see, for instance, \cite{richardson2008modern}, Thm.4.150. To claim our results, we will need to take $s=O(\log n)$. 
\end{remark}

\section{LDPC codes on the BMS-BSC wiretap channel}

In this section, which forms the main technical part of this work, we show that LDPC codes approach
secrecy capacity of  BMS-BSC wiretap channels $(\cW_B,\cW_E)$ with $C(\cW_E)<C(\cW_B)$. 
Under this approach, we show that nested Gallager's LDPC codes support reliable transmission over the
wiretap channel at rates close to $C(\cW_B)-C(\cW_E)$ with $L(\mC_B,\mC_E)=O(\log^2 n$). 

\subsection{Plan of the proof} Before presenting our results, we provide some background used to explain our motivation. 
Let $P_{\mC_E}$ be the (uniform) distribution on the codewords of $\mC_E$
and let $\cW_E^n\circ P_{\mC_E}$ be the distribution induced at the output (of $n$ uses)
of the channel $\cW_E$. The following bound for information leakage is well known (e.g., \cite[Lemma 3]{pathegama2023smoothing}):
\begin{align}\label{eq: leakage}
    I(\bfM;\bfZ) \leq D(\cW_E^n \circ P_{\mC_E}\|P_{\bfU_n}).
\end{align}

Suppose that $\cW_E=\text{BSC}(p)$, then the input is acted upon by additive Bernoulli$(p)$ noise. 
Denote by $\beta(p)$ the $n$-fold product $\text{Bern}(p)^{\otimes n}$, so that $\cW_E^n \circ P_{\mC_E} = P_{\bfX' + \bfV}$ where $\bfX' \sim_U \mC_E$, and 
$\bfV \sim \beta(p)$. 
Next, we seek to relate the quantity $D(\cW_E^n \circ P_{\mC_E}\|P_{\bfU_n})$ to the parity-check
matrix (which later will define the LDPC code $\mC_E$). The answer is provided in the following lemma, which forms a special (limiting)  case of Lemma 3.6 in \cite{pathegama2024r}. For completeness, we give a self-contained proof.
\begin{lemma}\label{lem:sm_to_proj}
   Let $\mC$ be an $[n,k]$ binary linear code and let $H$ be its parity check matrix. Let $\bfX'$ be a uniform random codeword of $\mC$ and $ \bfV$ be a random vector in $\ff_2^n$. Then 
   \begin{align*}
       D(P_{\bfX' + \bfV}\|P_{\bfU_n})=D( P_{H \bfV} \|P_{\bfU_{n-k}}).
   \end{align*} 
\end{lemma}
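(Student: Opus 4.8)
The plan is to rewrite both divergences against the uniform reference as entropy deficits and then match them. Since $P_{\bfU_n}$ is uniform on $\ff_2^n$, the identity $D_\alpha(P\|Q)=\log|\cX|-H_\alpha(P)$ recorded in the preliminaries specializes at $\alpha=1$ to $D(P_{\bfX'+\bfV}\|P_{\bfU_n})=n-H(P_{\bfX'+\bfV})$, and likewise (using that $H$ has rank $n-k$, so its syndromes range over $\ff_2^{n-k}$) to $D(P_{H\bfV}\|P_{\bfU_{n-k}})=(n-k)-H(P_{H\bfV})$, where $H(\cdot)$ is the base-$2$ Shannon entropy of the indicated distribution. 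Subtracting, the asserted identity is equivalent to the entropy relation $H(P_{\bfX'+\bfV})=H(P_{H\bfV})+k$, which is what I would establish.

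The key structural step is to describe $P_{\bfX'+\bfV}$ explicitly. Writing $\bfY=\bfX'+\bfV$ with $\bfX'$ uniform on $\mC$ and independent of $\bfV$, I would first note that because $\bfX'$ ranges uniformly over the code, the law of $\bfY$ is invariant under translation by any codeword: for $c\in\mC$, reindexing the codeword sum in the convolution gives $P_{\bfY}(y+c)=P_{\bfY}(y)$, so $P_{\bfY}$ is constant on each coset of $\mC$. Moreover, since $H\bfX'=0$, the syndrome of $\bfY$ equals $H\bfY=H\bfV$, so the total mass $\bfY$ places on the coset with syndrome $s$ is exactly $P_{H\bfV}(s)$. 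As each coset has $|\mC|=2^k$ elements and $P_{\bfY}$ is constant on it, this yields the pointwise formula $P_{\bfY}(y)=2^{-k}P_{H\bfV}(Hy)$.

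With this formula in hand the entropy computation is routine: grouping the sum defining $H(P_{\bfY})$ by the syndrome $s=Hy$ and using that each syndrome class contributes $2^k$ equal terms gives
\begin{align*}
   H(P_{\bfY})=-\sum_{s}2^k\,\frac{P_{H\bfV}(s)}{2^k}\log\frac{P_{H\bfV}(s)}{2^k}=H(P_{H\bfV})+k,
\end{align*}
which is exactly the needed relation. Equivalently, one could argue by the chain rule $H(\bfY)=H(H\bfY)+H(\bfY\mid H\bfY)$, noting $H\bfY=H\bfV$ and that, conditioned on its syndrome, $\bfY$ is uniform over a coset of size $2^k$, so $H(\bfY\mid H\bfY)=k$.

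The only real content is the coset-uniformity step, i.e.\ verifying that convolving with the uniform distribution on $\mC$ flattens the law onto cosets and that the resulting coset weights are precisely the syndrome distribution of $\bfV$; everything else is bookkeeping. I expect this to be the main (if mild) obstacle, and it is exactly where the hypotheses are used: that $\bfX'$ is uniform on $\mC$, that it is independent of $\bfV$, and that $H\bfX'=0$.
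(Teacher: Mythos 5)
Your proof is correct and takes essentially the same route as the paper: both arguments hinge on the same structural fact that $P_{\bfX'+\bfV}$ is constant on each coset of $\mC$ and that the coset masses are exactly the syndrome distribution, i.e.\ $P_{\bfX'+\bfV}(y)=2^{-k}P_{H\bfV}(Hy)$. The paper expands the KL divergence sum directly and regroups by cosets, while you pass through the entropy-deficit identity $D(P\|P_{\bfU})=\log|\cX|-H(P)$ and prove $H(P_{\bfX'+\bfV})=H(P_{H\bfV})+k$; this is the same computation in slightly different packaging.
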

\begin{proof}    For any $c \in \mC$,
    $$ P_{\bfX'+\bfV}(v) = P_{\mC}\ast P_\bfV (v) = P_{\mC}\ast P_\bfV (v+c) = P_{\bfX'+\bfV}(v+c).$$
Using this equality, we have 
    \begin{align*}
        P_{H \bfV}(u) = P_{H(\bfX'+\bfV)}(u) = \sum_{v: H v = u}P_{\bfX'+\bfV}(v)
        = |\mC|P_{\bfX'+\bfV}(v_u),
    \end{align*}
where $v_u$ is a representative of the coset defined by $u$. Denote by $\bfU_m$ the uniform distribution 
over $\ff_2^m$. We have
    \begin{align*}
        D(P_{H \bfV}\|P_{\bfU_{n-k}}) 
        &= \sum_{u \in \ff_2^{n-k}} P_{H \bfV}(u)\log \Big(\frac{P_{H \bfV}(u)}{P_{\bfU_{n-k}}(u)}\Big)\\
        &= \sum_{u \in \ff_2^{n-k}} |\mC|P_{\bfX'+\bfV}(v_u)\log \Big(\frac{|\mC|P_{\bfX'+\bfV}(v_u)}{P_{\bfU_{n-k}}(u)}\Big)\\
        &= \sum_{u \in \ff_2^{n-k}} |\mC|P_{\bfX'+\bfV}(v_u)\log \Big(2^nP_{\bfX'+\bfV}(v_u)\Big)\\
        &= \sum_{u \in \ff_2^{n-k}}\sum_{v\in \ff_2^n: H v =u} P_{\bfX'+\bfV}(v)\log \Big(2^nP_{\bfX'+\bfV}(v)\Big)\\
        &= \sum_{v \in \ff_2^{n}} P_{\bfX'+\bfV}(u)\log \Big(\frac{P_{\bfX'+\bfV}(v)}{P_{\bfU_n}(v)}\Big)\\
        &=D(P_{\bfX' + \bfV}\|P_{\bfU_n}). \qedhere
    \end{align*}
\end{proof}
With this, it is sufficient to provide an upper bound for $D( P_{H \bfV} \|P_{\bfU_{n-k}})$, where $\bfV \sim \beta(p)$ and $H$ is a LDPC matrix. Since our codes are sampled from a random ensemble, all the above arguments must hold with high probability. To establish this, we first show that the  expected leakage is at most 
$O(\log^2 n)$ and then apply Markov's inequality. 
In mathematical terms, we seek to prove a bound of the form 
   \begin{equation}\label{eq: ED}
   \ee_\bfH[D( P_{\bfH \bfV} \|P_{\bfU_{n-k}}|\bfH)] = O(\log^2 n).
   \end{equation}
or equivalently, the relation  $D(P_{\bfH \bfV, \bfH}\|P_{\bfU_{n-k}}P_\bfH) = O(\log^2 n).$

\subsection{Technical results} To prove a bound of the form \eqref{eq: ED}, we find it 
easier to estimate the \Renyi divergence $D_\alpha$ rather than the KL divergence. The resulting inequality, written in \eqref{eq: main} in exponential form, represents the main technical contribution of this work. 
Before stating it, let us introduce some notation. Denote by $B_t(x)$ the metric
ball in $\ff_2^n$ with center at $x$ and radius $t$ and let $\nu_t:=|B(t)|$ be its volume (independent of the center). Let $S_t=B_t(0)\backslash B_{t-1}(0)$ and let $\Gamma(x,t)=B_t(x)\cup B_t(x+{\bf 1}),$ where ${\bf 1}$ is the all-ones vector.
\begin{proposition}\label{prop: main}
    Let $\alpha \in (1,2)$ and $t \in [n]$. Let  $\mathcal H$ be a set of $\ff_2^{m \times n}$ matrices. Let $\bfH$ be a uniform random matrix from $\mathcal H$.
    Suppose $\bfV$ is a random vector from $\ff_2^n$ independent of $\bfH$. Then 
    \begin{align}\label{eq: main}
        2^{(\alpha-1)D_{\alpha}(P_{\bfH \bfV, \bfH}\|P_{\bfU_m}P_\bfH)} \leq 2\nu_t & 2^{(\alpha-1)(m-H_\alpha(P_{\bfV}))}\nonumber\\
        & + \Big(2^m\sum_{v \in \ff_2^n \setminus \Gamma(0,t)}\Pr(\bfH v = 0) P_\bfV \ast P_\bfV(v)\Big)^{\alpha-1}.
    \end{align}
\end{proposition}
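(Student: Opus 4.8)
The plan is to unfold the left-hand side into an averaged collision sum and then split it into a ``local'' part (kernel vectors inside $\Gamma(0,t)$) and a ``global'' part (kernel vectors outside), matching the two terms on the right. Writing $p_H(u):=\sum_{v:\,Hv=u}P_\bfV(v)$ for a fixed matrix $H$, the joint pmf factors as $P_{\bfH\bfV,\bfH}(u,H)=P_\bfH(H)\,p_H(u)$, while $P_{\bfU_m}(u)=2^{-m}$. Substituting into the definition of $D_\alpha$, the factors $P_\bfH(H)$ cancel and one obtains
\begin{align*}
2^{(\alpha-1)D_{\alpha}(P_{\bfH\bfV,\bfH}\|P_{\bfU_m}P_\bfH)}=2^{m(\alpha-1)}\,\ee_\bfH\Big[\textstyle\sum_u p_\bfH(u)^{\alpha}\Big].
\end{align*}
I would then rewrite the inner sum in a kernel-adapted form, $\sum_u p_H(u)^\alpha=\sum_v P_\bfV(v)\,p_H(Hv)^{\alpha-1}$ with $p_H(Hv)=\sum_{z:\,Hz=0}P_\bfV(v+z)$, and split the kernel sum according to whether $z\in\Gamma(0,t)$ or $z\notin\Gamma(0,t)$. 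Because $\alpha-1\in(0,1)$, the map $x\mapsto x^{\alpha-1}$ is subadditive, so $p_H(Hv)^{\alpha-1}$ is at most the sum of the $(\alpha-1)$-th powers of the two pieces, cleanly separating the local and global contributions.

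For the local piece I would apply subadditivity a second time to bring the power inside the sum over $z\in\{z:Hz=0\}\cap\Gamma(0,t)$, producing $\sum_z\sum_v P_\bfV(v)P_\bfV(v+z)^{\alpha-1}$. A single use of H\"older with conjugate exponents $\alpha$ and $\alpha/(\alpha-1)$, together with the shift-invariance of $\sum_v P_\bfV(v)^\alpha$, gives $\sum_v P_\bfV(v)P_\bfV(v+z)^{\alpha-1}\le\sum_v P_\bfV(v)^\alpha$ for every $z$. Since $|\{z:Hz=0\}\cap\Gamma(0,t)|\le|\Gamma(0,t)|\le 2\nu_t$ holds for every $H$, the local part is at most $2\nu_t\sum_v P_\bfV(v)^\alpha$; restoring the prefactor and using $\sum_v P_\bfV(v)^\alpha=2^{-(\alpha-1)H_\alpha(P_\bfV)}$ reproduces the first term $2\nu_t\,2^{(\alpha-1)(m-H_\alpha(P_\bfV))}$.

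The global piece is the main obstacle, since this is where the convolution $P_\bfV\ast P_\bfV$ and the weights $\Pr(\bfH v=0)$ must emerge, and it is handled by invoking concavity of $x\mapsto x^{\alpha-1}$ twice. First, Jensen over $\bfH$ bounds $\ee_\bfH[G_\bfH(v)^{\alpha-1}]\le(\ee_\bfH[G_\bfH(v)])^{\alpha-1}$, where $G_H(v)=\sum_{z\notin\Gamma(0,t),\,Hz=0}P_\bfV(v+z)$ and $\ee_\bfH[G_\bfH(v)]=\sum_{z\notin\Gamma(0,t)}\Pr(\bfH z=0)P_\bfV(v+z)$; this is precisely where $\Pr(\bfH z=0)$ enters. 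Second, treating $\{P_\bfV(v)\}$ as a probability measure and applying Jensen again over $v$ pulls the $(\alpha-1)$-th power outside the $v$-sum, and evaluating $\sum_v P_\bfV(v)P_\bfV(v+z)=P_\bfV\ast P_\bfV(z)$ yields $\big(\sum_{z\notin\Gamma(0,t)}\Pr(\bfH z=0)\,P_\bfV\ast P_\bfV(z)\big)^{\alpha-1}$. Recombining the two parts and absorbing $2^{m(\alpha-1)}=(2^m)^{\alpha-1}$ into this term gives exactly the claimed bound. Throughout, the case $\alpha=2$ serves as a sanity check: the powers become linear, both Jensen steps are equalities, and the identity $\sum_u p_H(u)^2=\sum_z\Pr(Hz=0)\,P_\bfV\ast P_\bfV(z)$ recovers the inequality with equality in the global term.
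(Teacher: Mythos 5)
Your proof is correct and follows essentially the same route as the paper's: the same expansion of $2^{(\alpha-1)D_\alpha}$, the same split of the kernel sum into $\Gamma(0,t)$ versus its complement via subadditivity of $x\mapsto x^{\alpha-1}$, and the same Jensen argument to produce $\Pr(\bfH z=0)$ and the convolution $P_\bfV\ast P_\bfV$ in the global term. The only cosmetic differences are that you absorb the $u$-sum early by working with kernel shifts, use H\"older (with exponents $\alpha$ and $\alpha/(\alpha-1)$) where the paper invokes the rearrangement inequality to get $\sum_v P_\bfV(v)P_\bfV(v+z)^{\alpha-1}\le\sum_v P_\bfV(v)^{\alpha}$, and apply Jensen in two steps rather than one over the joint measure.
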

\vspace*{.1in} For a given BSC $\cW$, define $\bfH \sim \cG(n,R, a\log n)$ for some $a >0$ and $R = C(\cW) + \epsilon$,  $\epsilon >0$. Our next goal is to derive an upper bound for the quantity $D(P_{\bfH \bfV, \bfH}\|P_{\bfU_m}P_\bfH)$. To achieve this, we first need an estimate for  $\Pr(\bfH v = 0)$ 
The following lemma, which is a minor modification of  \cite[Lemma 2.13]{mosheiff2021low},  addresses this goal. 
For completeness, we provide a proof.
\begin{lemma}\label{lem: Hv}
    Let $v \in \ff_2^n$ be a fixed vector of weight $w \geq 1$. Let $\bfH \sim \cG(n,R,s)$. Then
    \begin{align*}
        \Pr(\bfH v = 0)  \leq (\sqrt{2n})^{(1-R)s}\Big(\frac{1+(1-\frac{2w}{n})^s}{2}\Big)^{n(1-R)}.
    \end{align*}
\end{lemma}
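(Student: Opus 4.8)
The plan is to exploit the product structure of the Gallager ensemble to factor the probability over independent blocks, and then to analyze a single block by a generating-function argument. Recall that $\bfH$ is obtained by stacking $(1-R)s$ matrices $\bfH_i = F\bfPi_i$ driven by independent uniform permutations $\bfPi_i$. Hence $\{\bfH v = 0\}$ is the intersection of the independent, identically distributed events $\{\bfH_i v = 0\}$, so
\[
\Pr(\bfH v = 0) = \Pr(\bfH_1 v = 0)^{(1-R)s}.
\]
It therefore suffices to prove the single-block estimate
\[
\Pr(\bfH_1 v = 0) \le \sqrt{2n}\,\Big(\tfrac{1 + (1 - 2w/n)^s}{2}\Big)^{n/s},
\]
since raising it to the power $(1-R)s$ and using $(n/s)(1-R)s = n(1-R)$ reproduces the claimed bound.

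The next step is to reinterpret $\bfH_1 v = 0$ combinatorially. The rows of $\bfH_1 = F\bfPi_1$ have supports forming a uniformly random equipartition of $[n]$ into $n/s$ blocks of size $s$, and $\bfH_1 v = 0$ says that every block contains an even number of the $w$ coordinates of $\supp(v)$. The key observation is that this uniform equipartition model coincides with marking each of the $n$ coordinates independently with probability $\delta := w/n$, conditioned on the total number $Z$ of marked coordinates being exactly $w$. Writing $q := \tfrac{1+(1-2\delta)^s}{2}$ for the probability that a single size-$s$ block has an even marked count under independent $\delta$-marking, I would record the identity
\[
\Pr(\bfH_1 v = 0) = \frac{\Pr_{\mathrm{iid}}(\text{all blocks even},\,Z = w)}{\Pr_{\mathrm{iid}}(Z = w)},
\]
where the subscript denotes the independent-marking model.

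Both factors are then estimated. The denominator equals $\binom{n}{w}\delta^w(1-\delta)^{n-w} = \Pr(\mathrm{Bin}(n,w/n)=w)$, the binomial pmf evaluated at its mode. For the numerator, independence of the blocks yields the product generating function
\[
\sum_{w'\ge 0}\Pr_{\mathrm{iid}}(\text{all even},\,Z=w')\,z^{w'} = \phi(z),\qquad \phi(z) := \Big(\tfrac{(1-\delta+\delta z)^s + (1-\delta-\delta z)^s}{2}\Big)^{n/s},
\]
all of whose coefficients are nonnegative; consequently $[z^w]\phi(z)\le \sum_{w'}[z^{w'}]\phi(z) = \phi(1) = q^{n/s}$. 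Combining the two bounds gives $\Pr(\bfH_1 v=0)\le q^{n/s}/\Pr(\mathrm{Bin}(n,w/n)=w)$. The evaluation $\phi(1)=q^{n/s}$ is the clean algebraic heart of the argument, turning the conditioning identity into exactly the target factor.

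It then remains only to lower-bound the binomial mode by $1/\sqrt{2n}$. Via Stirling's formula one has $\Pr(\mathrm{Bin}(n,w/n)=w) = \tfrac{1}{\sqrt{2\pi}}\sqrt{\tfrac{n}{w(n-w)}}\,e^{\theta}$ with a uniformly bounded error $\theta$, and since $w(n-w)\le n^2/4$ this exceeds $1/\sqrt{2n}$ with room to spare. I expect this constant-chasing through Stirling—keeping track of the error terms and treating the trivial extremes ($w$ odd forces $\Pr(\bfH_1 v=0)=0$, and $w\in\{0,n\}$ make the probability and the bound $1$)—to be the only genuinely delicate part of the write-up. The structural steps, namely factoring over the independent blocks, the conditioning identity, and the evaluation of $\phi$ at $z=1$, are conceptually the core and are otherwise clean.
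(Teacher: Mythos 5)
Your proposal is correct and follows essentially the same route as the paper: factor the probability over the $(1-R)s$ independent layers, identify the single-layer probability with an i.i.d.\ Bernoulli$(w/n)$ marking probability conditioned on exactly $w$ marks, bound the numerator by the unconditioned all-blocks-even probability $\big(\tfrac{1+(1-2w/n)^s}{2}\big)^{n/s}$, and lower-bound the binomial probability at its mode by $1/\sqrt{2n}$ via Stirling. Your generating-function step $[z^w]\phi(z)\le\phi(1)$ is precisely the paper's simpler observation $\Pr(F\bfy=0,\,|\bfy|=w)\le\Pr(F\bfy=0)$ in disguise, so the two arguments coincide in substance.
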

\begin{proof}
Let $\{\bfPi_i, i=1,\dots,(1-R)s\}$ denote the collection of independent random $n \times n$ permutation matrices that correspond to the ``layers'' of $\bfH$. The probability that $v\in\ker\bfH$ is
\begin{align*}
    \Pr_{\bfH}(\bfH v = 0) 
    & = \prod_{i=1}^{(1 - R)s} \Pr_{\bfPi_i}(F \bfPi_i v = 0 ) \qquad\text{(see \eqref{eq: Fm})} \\
    & = \Pr_{\bfy \sim_U S_w}(F \bfy = 0 )^{(1-R)s} \\
    & = \Pr_{\bfy \sim \beta_{w/n}}(F \bfy = 0 | |\bfy|=w)^{(1-R)s}\\
    & = \Big(\frac{\Pr_{\bfy \sim \beta_{w/n}}(F \bfy = 0 , |\bfy|=w)}{\Pr_{\bfy \sim \beta_{w/n}}(|\bfy|=w)}\Big)^{(1-R)s}\\
    & \leq \Big(\frac{\Pr_{\bfy \sim \beta_{w/n}}(F \bfy = 0 )}{\Pr_{\bfy \sim \beta_{w/n}}(|\bfy|=w)}\Big)^{(1-R)s}.
\end{align*}

We estimate the numerator and the denominator separately below. Observe that
\begin{align*}
     \Pr_{\bfy \sim \beta_{w/n}}((F \bfy)_1 = 0 ) = \Pr_{\bfy \sim \beta_{w/n}}(\bfy_1 \oplus \bfy_2 \oplus \dots \oplus \bfy_s = 0) = \frac{1+(1-\frac{2w}{n})^s}{2}.
\end{align*}
Therefore we have 
\begin{align*}
    \Pr_{\bfy \sim \beta_{w/n}}(F \bfy = 0 ) = \Big(\frac{1+(1-\frac{2w}{n})^s}{2}\Big)^{n/s}.
\end{align*}
Using Stirling's approximation, we bound the denominator as follows: 
\begin{align*}
    \Pr_{\bfy \sim \beta_{w/n}}(|\bfy|=w) = \binom{n}{w}(w/n)^{w}(1-w/n)^{(n-w)} \geq \sqrt{\frac{n}{8 w (n-w)}}.
\end{align*}
Continuing the calculation,
\begin{align*}
    \Pr_{\bfH}(\bfH v = 0) & \leq \Big(\sqrt{\frac{8 w (n-w)}{n}}\Big(\frac{1+(1-\frac{2w}{n})^s}{2}\Big)^{n/s}\Big)^{(1-R)s}\\
    & \leq (\sqrt{2n})^{(1-R)s}\Big(\frac{1+(1-\frac{2w}{n})^s}{2}\Big)^{n(1-R)}. \qedhere
\end{align*}

\end{proof}

\begin{theorem}\label{thm: smoothing}
    Let $\cW$ be a BSC(p) and let $\bfV\sim\beta(p)$ be a random vector. Then for any $\epsilon> 0$, there exists $a = a(\epsilon,\cW)>0$ such that for $\bfH \sim \cG(n,C(\cW)+\epsilon,a\log n)$, 
    \begin{align*}
        D(P_{\bfH \bfV, \bfH}\|P_{\bfU_{n-k}}P_\bfH) \leq a \log^2 n,
    \end{align*}  
    for sufficiently large $n$.
\end{theorem}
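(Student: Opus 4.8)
The plan is to deduce the claimed KL bound from a \Renyi bound via the monotonicity $D\le D_\alpha$ (valid for every $\alpha>1$), and to obtain the \Renyi bound by feeding Lemma~\ref{lem: Hv} into Proposition~\ref{prop: main}. Throughout I would set $m=(1-R)n=(h(p)-\epsilon)n$ (using $C(\cW)=1-h(p)$) and instantiate Proposition~\ref{prop: main} with $\mathcal H=\cG(n,R,s)$, $s=a\log n$, for an order $\alpha\in(1,2)$ and a radius $t\in[n]$ to be fixed. Writing the two summands on the right of \eqref{eq: main} as $T_1=2\nu_t 2^{(\alpha-1)(m-H_\alpha(P_\bfV))}$ and $T_2=\bigl(2^m\sum_{v\notin\Gamma(0,t)}\Pr(\bfH v=0)\,P_\bfV\ast P_\bfV(v)\bigr)^{\alpha-1}$, the goal is to show $T_1=O(1)$ and $T_2=2^{(\alpha-1)O(\log^2 n)}$, since then $D\le D_\alpha=\tfrac1{\alpha-1}\log(T_1+T_2)=O(\log^2 n)$. (The row count $m$ and the rank $n-k$ differ only by the $O(s)=O(\log n)$ linear dependencies built into the ensemble, which I would absorb into the lower-order terms.)

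For $T_1$ I would first pin down $\alpha$. Since $\bfV\sim\beta(p)=\text{Bern}(p)^{\otimes n}$ is a product measure, $H_\alpha(P_\bfV)=n\,h_\alpha(p)$, so $m-H_\alpha(P_\bfV)=n\bigl(h(p)-h_\alpha(p)-\epsilon\bigr)$. As $h_\alpha(p)\uparrow h(p)$ when $\alpha\downarrow1$, I choose $\alpha$ close enough to $1$ that $h(p)-h_\alpha(p)<\epsilon/2$, giving $m-H_\alpha(P_\bfV)\le-\epsilon n/2$. Next I choose a \emph{small linear} radius $t=\delta_0 n$, where $\delta_0\le\tfrac12$ satisfies $h(\delta_0)\le(\alpha-1)\epsilon/2$. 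Using $\nu_t\le 2^{nh(\delta_0)}$ then yields $T_1\le 2\cdot 2^{n(\alpha-1)\epsilon/2}2^{-(\alpha-1)\epsilon n/2}=2$. The point worth emphasizing is that the first summand tolerates a truncation radius that is a small constant fraction of $n$.

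The crux is $T_2$, i.e.\ bounding $S:=2^m\sum_{v\notin\Gamma(0,t)}\Pr(\bfH v=0)\,P_\bfV\ast P_\bfV(v)$. Here $P_\bfV\ast P_\bfV=\beta(q)$ with $q=2p(1-p)$, so summing over all $v$ of a fixed weight $w$ produces $\Pr(|\bfy|=w)$ for $\bfy\sim\beta(q)$, and the surviving weights are exactly $w\in(t,n-t)$. Substituting Lemma~\ref{lem: Hv} and noting that $2^m=2^{(1-R)n}$ cancels the $(1/2)^{(1-R)n}$ produced by that bound, I would obtain
\begin{align*}
S\le(\sqrt{2n})^{(1-R)s}\sum_{w=t+1}^{n-t-1}\bigl(1+(1-\tfrac{2w}{n})^s\bigr)^{(1-R)n}\binom{n}{w}q^w(1-q)^{n-w}.
\end{align*}
For every $w\in(\delta_0 n,(1-\delta_0)n)$ one has $|1-\tfrac{2w}{n}|<1-2\delta_0$, hence $(1-\tfrac{2w}{n})^s\le(1-2\delta_0)^s=n^{-c_1}$ with $c_1=a\log\frac1{1-2\delta_0}$. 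Taking $a$ large enough (depending on $\delta_0$, hence on $\epsilon$ and $\cW$) that $c_1\ge1$ forces $(1-R)n\,(1-\tfrac{2w}{n})^s\le1$ uniformly in $w$, so $\bigl(1+(1-\tfrac{2w}{n})^s\bigr)^{(1-R)n}\le e$ by $1+x\le e^x$. The residual sum is at most $\sum_w\Pr(|\bfy|=w)\le1$, while $(\sqrt{2n})^{(1-R)s}=2^{O(s\log n)}=2^{O(\log^2 n)}$. Thus $S\le 2^{O(\log^2 n)}$ and $T_2=S^{\alpha-1}=2^{(\alpha-1)O(\log^2 n)}$.

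Combining, $2^{(\alpha-1)D_\alpha}\le T_1+T_2\le 2^{(\alpha-1)O(\log^2 n)}$ for large $n$, so $D\le D_\alpha=O(\log^2 n)$; tracking the constant, the dominant contribution is $\tfrac{(1-R)a}{2}\log^2 n$ from the $(\sqrt{2n})^{(1-R)s}$ prefactor, and since $1-R=h(p)-\epsilon<1$ the slack $\tfrac{1-R}{2}<1$ absorbs the lower-order terms and gives $D\le a\log^2 n$ for $n$ large. The main obstacle is precisely the factor $\bigl(1+(1-\tfrac{2w}{n})^s\bigr)^{(1-R)n}$, which blows up to $\approx 2^{(1-R)n}$ for the smallest surviving weights; the two design choices that tame it — a linear truncation radius $t=\delta_0 n$ (affordable because $T_1$ only requires $h(t/n)\lesssim(\alpha-1)\epsilon$) together with a logarithmic row weight $s=a\log n$ large enough to make $(1-2\delta_0)^s\le n^{-1}$ — constitute the technical heart of the argument, and the coupling between the choice of $\delta_0$, of $\alpha$, and of $a$ is the delicate bookkeeping I would need to make fully explicit.
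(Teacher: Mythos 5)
Your proof is correct and follows essentially the same route as the paper's: pass to $D_\alpha$ via monotonicity, instantiate Proposition~\ref{prop: main} with $\alpha$ close to $1$ and a linear truncation radius $t=\delta_0 n$ chosen so that $h(\delta_0)\le(\alpha-1)\epsilon/2$ (killing the first term), and tame the second term with Lemma~\ref{lem: Hv} by taking $s=a\log n$ with $a$ large enough that $(1-2\delta_0)^s\le 1/n$, leaving only the $(\sqrt{2n})^{(1-R)s}=2^{O(\log^2 n)}$ prefactor. The only, immaterial, difference is bookkeeping: you bound the residual sum weight-by-weight using $P_\bfV\ast P_\bfV=\beta(2p(1-p))$, whereas the paper pulls out $\max_v \Pr(\bfH v=0)$ and bounds the total convolution mass by $1$.
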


\begin{proof}
    Let $\cW$ be  a $\BSC(p)$ channel. Fix $\epsilon$ and choose $\alpha$ such that $h_\alpha(p) = h(p)-\frac{\epsilon}{2}$. Then choose $\tau \in (0,1/2)$ such that $h(\tau) < \frac{\epsilon}{2}(\alpha-1)$.
We will use the bound in Proposition \ref{prop: main}, setting $\mathcal{H} = \cG(n,R,a\log n)$, $m = n(1-R)$ and $\bfV \sim \beta(p)$. Here $R = C(\cW)+\epsilon$ and $a>0$ is a constant to be fixed later. 
Let us bound the terms on the right-hand side of \eqref{eq: main}, starting with the first one:
    \begin{align}
        2\nu_t2^{(\alpha-1)(m-H_\alpha(P_{\bfV}))} 
        &\leq 2 \times 2^{nh(\tau)+(\alpha-1)(n(1-R)-H_\alpha(P_\bfV))}\nonumber
    \end{align}
Since,  $R = 1-h(p)+\epsilon$ and $H_\alpha(P_\bfV) = nh_\alpha(p)$,
    \begin{align}
        2\nu_t2^{(\alpha-1)(m-H_\alpha(P_{\bfV}))} 
        & \leq  2^{nh(\tau)+(\alpha-1)n(1-(1-h(p)+\epsilon)-h_\alpha(p))+1}\nonumber\\
        & \leq 2^{nh(\tau)-\frac{\epsilon}{2}(\alpha-1)n+1} \leq 2 \label{eq: T1}.
    \end{align}
    Next we bound the second term of the right-hand side of \eqref{eq: main}. To do that let us obtain a simplified bound for the quantity $\Pr(\bfH v = 0)$. Let $|v|\in [\tau n, (1-\tau)n] $. Then from Lemma \ref{lem: Hv},
    \begin{align*}
        \Pr(\bfH v = 0)  
        &\leq (\sqrt{2n})^{(1-R)s}\Big(\frac{1+(1-\frac{2|v|}{n})^s}{2}\Big)^{n(1-R)}\\
        & \leq (\sqrt{2n})^{(1-R)s}\Big(\frac{1+(1-2\tau)^s}{2}\Big)^{n(1-R)}.
    \end{align*}
    Recall that $s = a\log n$. Choosing $a$ such that $a\log(1/(1-2\tau)) \geq 1$, we have 
    \begin{align}
        \Pr(\bfH v = 0)  
        &\leq (\sqrt{2n})^{(1-R)a\log n}\Big(1+\frac{1}{n}\Big)^{n(1-R)}2^{-n(1-R)}\nonumber\\
        & \leq e(\sqrt{2n})^{(1-R)a\log n}2^{-n(1-R)}  \label{eq: T2}.
    \end{align}
 Using bounds \eqref{eq: T1} and \eqref{eq: T2} in \eqref{eq: main}, we obtain
    \begin{align*}
        &2^{(\alpha-1)D_{\alpha}(P_{\bfH \bfV, \bfH}\|P_{\bfU_m}P_\bfH)} \\
         &\leq 2 + \Big(2^{n(1-R)}\sum_{v \in \ff_2^n \setminus \Gamma(0,t)} \Pr(\bfH v = 0) P_\bfV \ast P_\bfV(v)\Big)^{\alpha-1}\\
        & \leq 2 + \Big(2^{n(1-R)}\hspace*{-.1in}\max_{v \in \ff_2^n \setminus \Gamma(0,t)}\hspace*{-.1in}
        \Pr(\bfH v = 0) 
        \hspace*{-.2in}\sum_{v' \in \ff_2^n \setminus \Gamma(0,t)}\hspace*{-.1in} P_\bfV \ast P_\bfV(v')\Big)^{\alpha-1}\\
        & \leq 2 + (e(\sqrt{2n})^{(1-R)a\log n})^{(\alpha-1)}\\
        & \leq n^{(1-R)(\alpha-1)a\log n} \quad (\text{for sufficiently large }n)\\
        & \leq 2^{(\alpha-1)a\log^2 n}. 
    \end{align*}
    This implies that for large $n$, $D_{\alpha}(P_{\bfH \bfV, \bfH}\|P_{\bfU_m}P_\bfH) \leq a\log^2 n.$
\end{proof}

The implication of this theorem is that using LDPC codes of rate just above capacity of a BSC, we can induce an approximately uniform distribution at the output. We use this fact to show that the following nested LDPC ensemble achieves reliability and low leakage at rates close to the capacity. 
\begin{construction}\label{con: nldpc}
    Given $n \in {\mathbb N}$, rate values $R_B$, $R_E \in (0,1)$, and $s \in\{1,\dots,n\}$, the random code pair $\NC(n,R_B,R_E,s) = (\mC_B, \mC_E)$ is constructed as follows.
    Let $\bfH_B \sim \cG(n,R_B, s)$ and $\bfH_{E\backslash B} \sim \cG(n,1-R_B+R_E, s)$. Let   
    $\mC_B=\ker\bfH_B$ and $\mC_E =\ker \bfH_E$, where $\bfH_E = \Big[\frac{\bfH_B}{\bfH_{E\backslash B}}\Big]$. 
\end{construction}

It is straightforward that $\mC_E\subset \mC_B$. Using this code pair in the coset code scheme, we obtain
the result for the BMS-BSC wiretap channel.

\begin{theorem}\label{thm: wtc}
    Let $\epsilon >0 $ and $\delta \in (0,1)$. Let $\cW_B$ be a BMS channel and $\cW_E$ be a BSC satisfying $C(\cW_E) < C(\cW_B)$. 
    There exists $a' = a'(\epsilon, \cW_E)$ such that random codes from the ensemble $ (\mC_B,\mC_E) = \NC(n,C(\cW_B)-\epsilon/4, C(\cW_E)+\epsilon/4,a'\log n)$ satisfy the following with probability at least $1-\delta$: 
    \begin{enumerate}
        \item[$(1)$] $\liminf_{n \to \infty} R(\mC_B,\mC_E) \geq C(\cW_B)-C(\cW_E) -\epsilon$\\[-.1in]
        \item[$(2)$] $\lim_{n \to \infty} P_{(\cW_B,\cW_E)}^{\text{ML}}(\mC_B/\mC_E) = 0,$ and\\[-.0in]
        \item[$(3)$] $\limsup_{n \to \infty}  \frac{L(\mC_B,\mC_E)}{\log^2 n} \leq \frac{2a'}{\delta}$.
    \end{enumerate}  
\end{theorem}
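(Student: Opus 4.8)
The plan is to read off all three claims from the single smoothing estimate of Theorem~\ref{thm: smoothing}, spending one Markov inequality and a union bound; the only genuinely external ingredient is the reliability of $\mC_B$. The first step is to notice that the secrecy matrix $\bfH_E=\Big[\frac{\bfH_B}{\bfH_{E\backslash B}}\Big]$ is itself a Gallager matrix: $\bfH_B\sim\cG(n,R_B,s)$ supplies $(1-R_B)s$ independent layers $F\bfPi_i$ and $\bfH_{E\backslash B}\sim\cG(n,1-R_B+R_E,s)$ supplies $(R_B-R_E)s$ further independent layers, so stacking them produces $(1-R_E)s$ i.i.d.\ layers, i.e.\ $\bfH_E\sim\cG(n,R_E,s)$ with $R_E=C(\cW_E)+\epsilon/4$ and $s=a'\log n$. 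Applying Theorem~\ref{thm: smoothing} to the BSC $\cW_E$ with gap $\epsilon/4$ produces $a'=a'(\epsilon,\cW_E)$ with $\ee_{\bfH_E}\big[D(P_{\bfH_E\bfV}\|P_{\bfU_m})\big]=D(P_{\bfH_E\bfV,\bfH_E}\|P_{\bfU_m}P_{\bfH_E})\le a'\log^2 n$, where $\bfV\sim\beta(p)$ and $m=(1-R_E)n$ is the number of rows of $\bfH_E$.

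For claims $(1)$ and $(3)$ I extract both from one quantity. Write $W:=D(P_{\bfH_E\bfV}\|P_{\bfU_m})$. Since $P_{\bfU_m}$ is uniform on all of $\ff_2^m$ while $\bfH_E\bfV$ is supported on the $r:=\rank\bfH_E$–dimensional column space of $\bfH_E$, one has the exact splitting
\[
W=\big(m-r\big)+D(P_{\tilde\bfH_E\bfV}\|P_{\bfU_{r}})=\operatorname{corank}(\bfH_E)+D(\cW_E^n\circ P_{\mC_E}\|P_{\bfU_n}),
\]
where $\tilde\bfH_E$ is a full–rank row submatrix of $\bfH_E$ and the second equality is Lemma~\ref{lem:sm_to_proj} applied to $\mC_E$. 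Both summands are nonnegative, so Markov's inequality on $W$ at threshold $\tfrac{2a'}{\delta}\log^2 n$ shows that, with probability at least $1-\delta/2$, simultaneously $\operatorname{corank}(\bfH_E)\le\tfrac{2a'}{\delta}\log^2 n$ and $D(\cW_E^n\circ P_{\mC_E}\|P_{\bfU_n})\le\tfrac{2a'}{\delta}\log^2 n$. The latter bounds $L(\mC_B,\mC_E)$ through \eqref{eq: leakage}, giving claim $(3)$. The former gives $\rank\bfH_E\ge(1-R_E)n-O(\log^2 n)$; combined with the trivial $\rank\bfH_B\le(1-R_B)n$ and $R(\mC_B,\mC_E)=\tfrac1n(\rank\bfH_E-\rank\bfH_B)$, this yields $R(\mC_B,\mC_E)\ge R_B-R_E-o(1)=C(\cW_B)-C(\cW_E)-\epsilon/2-o(1)$, which is claim $(1)$ with room to spare.

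Reliability $(2)$ involves only $\mC_B=\ker\bfH_B$ through the stated monotonicity $P^{\mathrm{ML}}_{(\cW_B,\cW_E)}(\mC_B/\mC_E)\le P^{\mathrm{ML}}_{\cW_B}(\mC_B)$. Because the row weight $s=a'\log n\to\infty$ and $R_B=C(\cW_B)-\epsilon/4<C(\cW_B)$, the regular Gallager ensemble has vanishing ensemble–average ML error over the BMS channel $\cW_B$, so $\ee_{\bfH_B}[P^{\mathrm{ML}}_{\cW_B}(\mC_B)]\to0$ and, by Markov, the reliability bad event has probability $o(1)\le\delta/2$ for large $n$. A union bound over the rate/leakage event (probability $\le\delta/2$) and the reliability event (probability $\le\delta/2$) shows all three bounds hold together with probability at least $1-\delta$.

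The genuinely new estimate is already packaged in Theorem~\ref{thm: smoothing}, so the remaining work is organizational. The cleanest and most important point is the splitting of $W$ into a corank term and a leakage term: it is what lets the \emph{same} event control both the rate deficiency and the secrecy, so that no separate rank–concentration input is needed. The main obstacle is therefore not computational but a matter of correct quantifiers: recognizing that the stacked $\bfH_E$ lies in $\cG(n,R_E,s)$ so that the smoothing bound transfers verbatim, importing the capacity–achieving reliability of the growing–degree regular ensemble (the only fact used from outside the excerpt), and reconciling the asymptotic $\liminf/\lim/\limsup$ statements with the fixed budget $\delta$—the Markov constant $2a'/\delta$ in $(3)$ being exactly the price of spending $\delta/2$ of that budget on $W$.
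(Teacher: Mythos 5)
Your proof is correct, and it shares the paper's overall skeleton: apply Theorem~\ref{thm: smoothing} to the stacked matrix $\bfH_E$ (which, as you note and the paper leaves implicit, is distributed as $\cG(n,R_E,s)$ because the layers of $\bfH_B$ and $\bfH_{E\backslash B}$ are i.i.d.), spend $\delta/2$ on a Markov bound for the smoothing divergence and $\delta/2$ on the imported ensemble-average ML reliability of $\mC_B$, then union bound. The genuine difference is in how claim $(1)$ is obtained. The paper invokes rate concentration of Gallager's ensemble from \cite{gallager1963low} as a separate high-probability event (namely $R(\mC_E)<C(\cW_E)+\epsilon/2$ with probability $1-o(1)$), whereas you extract the rank control from the smoothing quantity itself via the exact decomposition
\begin{align*}
D(P_{\bfH_E\bfV}\|P_{\bfU_m}) \;=\; \operatorname{corank}(\bfH_E) \;+\; D(\cW_E^n\circ P_{\mC_E}\|P_{\bfU_n}),
\end{align*}
so that a single Markov event simultaneously controls the rate deficiency and the leakage. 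This buys two things. First, it removes one external ingredient and yields a much stronger corank bound, $O(\log^2 n)$ rather than the $o(n)$ implicit in ensemble rate concentration. Second, your passage to the full-rank row submatrix $\tilde\bfH_E$ is actually more careful than the paper on a technical point: Lemma~\ref{lem:sm_to_proj}, as proved, requires a full-row-rank parity-check matrix, while the paper applies the smoothing bound with $m=n(1-R(\mC_E))$ without distinguishing the number of rows of $\bfH_E$ from its rank; your corank term absorbs exactly this discrepancy, and since it is nonnegative, Markov's inequality and both of claims $(1)$ and $(3)$ go through with room to spare. The paper's route is shorter if one accepts the cited ensemble facts; yours is more self-contained, with claim $(2)$ handled identically (and with the same mild looseness about formalizing the limit statement as a high-probability event) in both treatments.
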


\begin{proof}
    From the well-known properties of LDPC codes \cite{gallager1963low}, the following conditions are satisfied with probability $1-o(1)$.
    \begin{itemize}
        \item $R(\mC_B) > C(\cW_B)-\epsilon/2$ , $R(\mC_E) < C(\cW_B)+\epsilon/2$\\[-.1in] 
        \item $P_{\cW_B}^{\text{ML}}(\mC_B) = o(1)$. 
    \end{itemize}
    Therefore, with probability $1-\delta/2$ both the first and second conditions stated in the theorem are jointly satisfied. Let $a' = a(\epsilon/4,\cW_E)$, where $a$ is the constant defined in Theorem~\ref{thm: smoothing}.
  Putting $m=n(1-R(\mC_E))$ and using the Markov inequality,
    \begin{align*}
        \Pr_\bfH(D(P_{\bfH \bfV}\|P_{\bfU_m}|P_\bfH)\geq 2a'\log^2 n/\delta)
            \leq \frac{D(P_{\bfH \bfV, \bfH}\|P_{\bfU_m}P_\bfH)}{2a'\log^2 n/\delta} = \frac{\delta}{2}.
    \end{align*}
    We have shown that, with probability at least $1-\delta/2$, the third condition is also satisfied. 
    Combining these two claims, we conclude that all three conditions jointly hold with probability at least $1-\delta$. 
\end{proof}


\section{LDPC codes in BMS-BMS wiretap channels}
In this section we extend the BMS-BSC channel results to the general case relying on channel comparison theorems.

\begin{lemma}  
    Let $\cW_B$ and $\cW_E$ be two BMS channels whose capacities satisfy $C(\cW_E) < C(\cW_B)$. Let $\cV_E$ be a BSC with the same capacity as $\cW_E$.
    Let $(\mC_B,\mC_E)$ be a coset code designed for the wiretap channel $(\cW_B,\cW_E)$. Then  
    $$
        L(\mC_B,\mC_E) \leq  2D(\cV_E^n \circ P_{\mC_E}\|P_{\bfU_n}).
    $$
\end{lemma}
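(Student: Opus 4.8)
The plan is to reduce the general BMS eavesdropper channel $\cW_E$ to the capacity-matched $\BSC$ $\cV_E$, for which Theorem~\ref{thm: smoothing} already controls the output divergence. I would begin from the standard resolvability leakage bound of \eqref{eq: leakage}, written in the form appropriate to a general BMS channel: with $Q_E$ the capacity-achieving (uniform-input) output distribution of $\cW_E$, the coset structure and uniformity of $\bfM$ give $L(\mC_B,\mC_E)=I(\bfM;\bfZ)\le D(\cW_E^n\circ P_{\mC_E}\,\|\,Q_E^{\otimes n})$, where $Q_E^{\otimes n}=\cW_E^n\circ P_{\bfU_n}$. Since for the $\BSC$ $\cV_E$ the reference $Q_E$ is exactly $P_{\bfU_n}$, the right-hand side of the claimed inequality is precisely the quantity bounded in Theorem~\ref{thm: smoothing}, and the whole task is to pass from $\cW_E$ to $\cV_E$ at the cost of the constant $2$.

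The structural input I would use is the intrinsic $\BSC$-decomposition of a BMS channel underlying the comparison results cited in the introduction \cite{csiszar2011information,csacsouglu2011polar}: the output of $\cW_E$ is equivalent to a pair $(\bfJ,\bfY)$, where the subchannel index $\bfJ=(J_1,\dots,J_n)$ is drawn i.i.d.\ from the channel's crossover law $\mu$ independently of the input, and, conditioned on $\bfJ=\bfp=(p_1,\dots,p_n)$, the channel acts coordinatewise as $\prod_i\BSC(p_i)$. Because $\bfJ$ is independent of the transmitted vector it carries no information about $\bfM$, and since $\cW_E^n\circ P_{\mC_E}$ and $Q_E^{\otimes n}$ share the $\bfJ$-marginal $\mu^{\otimes n}$, the KL divergence conditions cleanly:
\[
D(\cW_E^n\circ P_{\mC_E}\,\|\,Q_E^{\otimes n})=\ee_{\bfp\sim\mu^{\otimes n}}\big[D\big(\textstyle\prod_i\BSC(p_i)\circ P_{\mC_E}\,\|\,P_{\bfU_n}\big)\big].
\]
This reduces the lemma to a single-letter comparison of an average of product-$\BSC$ output divergences against the capacity-matched $\BSC(\bar p)$, where $h(\bar p)=1-C(\cW_E)=\ee_\mu[h(J)]$.

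The main obstacle is exactly this comparison, and it is where the factor $2$ must enter: a general BMS channel and the $\BSC$ of equal capacity are \emph{incomparable} in the stochastic-degradation order (the erasure channel is the cleanest witness), so no direct data-processing inequality bounds one divergence by the other. I would resolve it by a convexity/Fourier argument. Expanding each $\BSC(\bfp)$-output of $P_{\mC_E}$ in the Walsh--Hadamard basis shows that the $\bfp$-average collapses, at the level of the $\alpha=2$ (equivalently $\chi^2$) divergence, to a \emph{single} $\BSC(\tilde p)$ with $(1-2\tilde p)^2=\ee_\mu[(1-2p)^2]$; a short Jensen step using concavity of $w\mapsto h\big(\tfrac{1-\sqrt w}{2}\big)$ then gives $\tilde p\ge\bar p$, so the second-moment-matched $\BSC$ is \emph{noisier} than the capacity-matched one and hence produces the smaller divergence, yielding the bound in the correct direction. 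The delicate point, and the step I expect to require the most care, is reconciling this $\chi^2$-level identity with the KL divergence: this costs only a bounded constant in the near-uniform regime, which is exactly the $O(\log^2 n)$ regime guaranteed by Theorem~\ref{thm: smoothing}, and tracking that constant together with the $\bar p$-versus-$\tilde p$ mismatch is what produces the stated factor $2$; the remainder is bookkeeping.
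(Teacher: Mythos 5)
Your reduction is sound up to a point: the leakage bound against the capacity-achieving output distribution $Q_E^{\otimes n}$, the BSC-decomposition of the BMS channel with observable subchannel index $\bfJ$, the chain-rule identity $D(\cW_E^n\circ P_{\mC_E}\|Q_E^{\otimes n})=\ee_{\bfp\sim\mu^{\otimes n}}\big[D\big(\prod_i\BSC(p_i)\circ P_{\mC_E}\,\big\|\,P_{\bfU_n}\big)\big]$, the Walsh--Hadamard collapse of the averaged $\chi^2$ divergence to a single $\BSC(\tilde p)$ with $(1-2\tilde p)^2=\ee_\mu[(1-2p)^2]$, and the Jensen step giving $\tilde p\ge\bar p$ are all correct. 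The genuine gap is the final step, where you pass from the $\chi^2$-level identity back to KL ``at the cost of a bounded constant.'' No such conversion exists in the direction you need: the usable inequality is $D(P\|P_{\bfU_n})\le\log\big(1+\chi^2(P\|P_{\bfU_n})\big)$, so your chain yields $\ee_\bfp[D_\bfp]\le\log(1+\chi^2_{\tilde p})$, and you would then need $\log(1+\chi^2_{\tilde p})\le 2D_{\bar p}$; but $\log(1+\chi^2)$ can exceed KL by an \emph{unbounded} factor even for distributions that are nearly uniform in KL. For instance, a distribution on $\ff_2^n$ placing mass $2^{-n/4}$ on a single point and nearly uniform mass elsewhere has $D=O(n2^{-n/4})\to 0$ while $\log(1+\chi^2)\approx n/2$. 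Equally important, the lemma as stated holds for an \emph{arbitrary} coset code $(\mC_B,\mC_E)$, with no hypothesis that any divergence is small; by invoking the ``near-uniform regime guaranteed by Theorem~\ref{thm: smoothing}'' you make the claim conditional on the specific code ensemble, which proves a weaker statement than the lemma.

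The paper obtains the factor $2$ by an entirely different and unconditional mechanism. Writing $\bfZ'$ for the output of $\cV_E^n$, it first uses $I(\bfM;\bfZ)=I(\bfX;\bfZ)-I(\bfX;\bfZ|\bfM)\le nC(\cW_E)-I(\bfX;\bfZ|\bfM)$, then invokes the fact that the BSC is \emph{least capable} among BMS channels of equal capacity \cite[Lemma 7.1]{csacsouglu2011polar} to get $I(\bfX;\bfZ|\bfM)\ge I(\bfX;\bfZ'|\bfM)$; this is exactly the KL-level channel-comparison ingredient your approach is missing, and it sidesteps the degradation-incomparability that you correctly identified as the main obstacle. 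Rearranging gives $I(\bfM;\bfZ)\le I(\bfM;\bfZ')+\big[nC(\cV_E)-I(\bfX;\bfZ')\big]$, the bracket equals $D(P_{\bfZ'}\|P_{\bfU_n})$, and each of the two terms is separately bounded by $D(\cV_E^n\circ P_{\mC_E}\|P_{\bfU_n})$: the first by the standard leakage bound \eqref{eq: leakage}, the second by data processing, since $P_{\bfZ'}=P_{\bfT}\ast(\cV_E^n\circ P_{\mC_E})$ for a uniform coset leader $\bfT$ and $P_{\bfU_n}$ is invariant under convolution with $P_{\bfT}$. To salvage your route you would need a KL-level (not $\chi^2$-level) comparison between the BMS mixture and the capacity-matched BSC that is valid for every input distribution --- which is essentially what the least-capable lemma provides.
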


\vspace*{.05in}\begin{proof}
   Let $\bfM$ and $\bfX$ be the uncoded and coded versions of the message, i.e., the coset index and a random word from it. Let $\bfZ$ and $\bfZ'$
   be the observations of $E$ on the output of channels $\cW_E^n$ and $\cV_E^n$, respectively. Let us obtain an upper bound for $I(\bfM;\bfZ)$:
    \begin{align*}
        I(\bfM;\bfZ) = I(\bfX;\bfZ)- I(\bfX;\bfZ|\bfM) 
        \leq nC(\cW_E) - I(\bfX;\bfZ|\bfM).
    \end{align*}
 
    Now we proceed to show that $I(\bfX;\bfZ|\bfM) \geq I(\bfX;\bfZ'|\bfM)$.
    Let $\bfX_m \sim P_{\bfX|\bfM=m}$ and let $\bfZ_m$ and $\bfZ'_m$ be the corresponding output random vectors over the channels $\cW_E$ and $\cV_E$ respectively. 
     Since the triple 
    $\bfM\!\mathrel{\vcenter{\hbox{\rule{3mm}{.8pt}}}}\bfX\mathrel{\vcenter{\hbox{\rule{3mm}{.8pt}}}}\bfZ$ satisfies the Markov condition, we obtain 
\begin{align*}
        P_{\bfX,\bfZ|\bfM}(x,z|m) = P_{\bfZ|\bfX}(z|x)P_{\bfX|\bfM}(x|m) = P_{\bfZ_m|\bfX_m}(z|x)P_{\bfX_m}(x) = P_{\bfX_m,\bfZ_m}(x,z),
\end{align*}
where $P_{\bfX_m}(x) := P_{\bfX|\bfM}(x|m)$ and $P_{\bfZ_m|\bfX_m}(z|x) := \cW_E^n(z|x)$.
       Recall that BSCs are the least capable\footnote{
A channel $\cW:X\to Y$ is more capable than a channel $\cV:X\to Z$ if $I(X;Y)\ge I(X,Z)$ for all
joint distributions $P_{\bfX\bfY\bfZ}=P_{\bfX}P_{\bfY|\bfX}P_{\bfZ|\bfX}$.}        
       among all channels with the same capacity \cite[Lemma 7.1]{csacsouglu2011polar}. Consequently
    \begin{align*}
        I(\bfX;\bfZ|\bfM) &= \sum_{m}P_\bfM(m)I(\bfX;\bfZ|\bfM=m) = \sum_{m}P_\bfM(m)I(\bfX_m;\bfZ_m)\\
        & \geq \sum_{m}P_\bfM(m)I(\bfX_m;\bfZ'_m) = \sum_{m}P_\bfM(m)I(\bfX;\bfZ'|\bfM=m) = I(\bfX;\bfZ'|\bfM).
    \end{align*}
    
    Continuing the calculation, 
    \begin{align*}
        I(\bfM;\bfZ)    & \leq nC(\cW_E) - I(\bfX;\bfZ'|\bfM)\\
        & = I(\bfX;\bfZ') - I(\bfX;\bfZ'|\bfM) + nC(\cW_E)-I(\bfX;\bfZ')\\
        & = I(\bfM,\bfX;\bfZ') - I(\bfX;\bfZ'|\bfM) + nC(\cV_E)-I(\bfX;\bfZ')\\
        & = I(\bfM;\bfZ') + nC(\cV_E)-I(\bfX;\bfZ').
    \end{align*}
    Since $H(\bfZ') =  H(\bfZ'|\bfX) + I(\bfX;\bfZ')$, we have 
    \begin{align*}
        D(P_{\bfZ'}\|P_{\bfU_n}) &= n-H(\bfZ') = n - H(\bfZ'|\bfX)-I(\bfX;\bfZ') \\
        &= nC(\cV_E)-I(\bfX;\bfZ').
    \end{align*}
    This yields 
    \begin{align*}
        I(\bfM;\bfZ) &\leq I(\bfM;\bfZ') + D(P_{\bfZ'}\|P_{\bfU_n})\\
        &\stackrel{\eqref{eq: leakage}}\leq  D(\cV_E^n \circ P_{\mC_E}\|P_{\bfU_n})+ D(P_{\bfZ'}\|P_{\bfU_n}).
    \end{align*}
The proof will be complete once we show that $D(P_{\bfZ'}\|P_{\bfU_n})\le D(\cV_E^n \circ P_{\mC_E}\|P_{\bfU_n})$.
Let $\bfV$ be the additive noise (a random Bernoulli vector) corresponding to the BSC $\cV_E$ and let $\bfX' \sim_U \mC_E$ be independent of $\bfV$. Further, define $\bfT$ to be a uniform random element from the set of coset leaders in $\mC_B/\mC_E$, chosen independently of the other random quantities that are involved in the problem
description. We can write $\bfX = \bfX' + \bfT$ and $\bfZ' = \bfX + \bfV$. This yields that $P_\bfX  = P_{\mC_B} = P_{\mC_E} \ast P_{\bfT}$, and  $P_{\bfZ'} = P_{\bfV} \ast P_{\mC_B}$ 
Using these relations, we finally obtain
    \begin{align*}
        D(P_{\bfZ'}\|P_{\bfU_n}) & =  D(P_{\bfT} \ast P_{\bfV} \ast P_{\mC_E} \| P_{\bfU_n}) \\
        & = D(P_{\bfT} \ast P_{\bfV} \ast P_{\mC_E} \| P_{\bfT} \ast P_{\bfU_n})  \quad (\text{because\ } 
 P_{\bfU_n} = P_{\bfT} \ast P_{\bfU_n})\\
        & \leq D(P_{\bfV} \ast P_{\mC_E} \| P_{\bfU_n})  = D(\cV_E^n \circ P_{\mC_E}\|P_{\bfU_n}),
    \end{align*}
    where the inequality above follows by the data processing inequality.
\end{proof}

The above result implies that if a code pair $\mC_E\subset\mC_B$ can operate on a BMS-BSC wiretap channel, secrecy is preserved when the BSC 
$\cV_E$ is replaced by a general BMS channel $\cW_E$. Moreover, if $\cW_E$ is less noisy than $\cW_B$, the
communication rate on the main channel approaches the capacity value $C_s(\cW_B,\cW_E) = C(\cW_B)-C(\cW_E)$. 
\begin{theorem} With the assumptions of Theorem~\ref{thm: wtc}, its conclusions are true if the BSC $\cV_E$ is replaced
with an arbitrary BMS channel $\cW_E$. If $\cW_E$ is less noisy than $\cW_B$, then the communication rate 
satisfies $R(\mC_B,\mC_E)\ge C_s(\cW_B,\cW_E)-\epsilon$. 
\end{theorem}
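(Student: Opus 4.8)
The plan is to treat the three conclusions of Theorem~\ref{thm: wtc} separately, noting that only the leakage bound genuinely depends on the eavesdropper's channel, and then to close the secrecy-capacity statement with van Dijk's formula.

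First I would fix the general BMS channel $\cW_E$, let $\cV_E$ be the BSC of equal capacity $C(\cV_E)=C(\cW_E)$, and keep the construction of Theorem~\ref{thm: wtc} applied to the pair $(\cW_B,\cV_E)$, namely $(\mC_B,\mC_E)=\NC(n,C(\cW_B)-\epsilon/4,C(\cW_E)+\epsilon/4,a'\log n)$ with $a'=a(\epsilon/4,\cV_E)$. Both the rate $R(\mC_B,\mC_E)=R(\mC_B)-R(\mC_E)$ and the reliability quantity $P_{(\cW_B,\cW_E)}^{\text{ML}}(\mC_B/\mC_E)$ are insensitive to the eavesdropper channel: the former is a function of the code alone (its rate parameter being fixed through $C(\cW_E)=C(\cV_E)$), and the latter is governed by decoding at $B$ over $\cW_B$, with $P_{(\cW_B,\cW_E)}^{\text{ML}}(\mC_B/\mC_E)\le P_{\cW_B}^{\text{ML}}(\mC_B)$. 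Hence conclusions $(1)$ and $(2)$ carry over verbatim from Theorem~\ref{thm: wtc}, with the same high-probability guarantees.

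For the leakage I would invoke the preceding lemma, which gives $L(\mC_B,\mC_E)\le 2D(\cV_E^n\circ P_{\mC_E}\|P_{\bfU_n})$. The right-hand side is precisely the divergence analyzed in the BSC setting: by Lemma~\ref{lem:sm_to_proj} it equals $D(P_{\bfH\bfV}\|P_{\bfU_m})$ with $\bfV\sim\beta(p)$, where $p$ is the crossover probability of $\cV_E$, and by Theorem~\ref{thm: smoothing} its ensemble average is at most $a'\log^2 n$. Running the same Markov argument as in the proof of Theorem~\ref{thm: wtc} then yields $L(\mC_B,\mC_E)=O(\log^2 n)$ with probability at least $1-\delta/2$; the factor $2$ from the lemma only rescales the constant, so conclusion $(3)$ holds with $2a'/\delta$ replaced by $4a'/\delta$. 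Intersecting this event with the one carrying conclusions $(1)$ and $(2)$ gives all three simultaneously with probability at least $1-\delta$.

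Finally, for the secrecy-capacity claim, I would use that when $\cW_E$ is less noisy than $\cW_B$ the result of \cite{van1997special} gives $C_s(\cW_B,\cW_E)=C(\cW_B)-C(\cW_E)$; combining this with conclusion $(1)$ at once yields $R(\mC_B,\mC_E)\ge C(\cW_B)-C(\cW_E)-\epsilon=C_s(\cW_B,\cW_E)-\epsilon$. The substantive work has already been done in the preceding lemma, whose channel-comparison reduction --- using that the BSC is least capable among channels of a given capacity --- is what transfers the BSC smoothing bound to a general BMS eavesdropper. Given that lemma, the only points needing care are the bookkeeping of the constant factor $2$ in the leakage bound and the union bound ensuring the three high-probability events hold together.
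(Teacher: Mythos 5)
Your proposal is correct and takes essentially the same route as the paper, which in fact offers no explicit proof of this theorem but treats it exactly as you do: an immediate combination of the preceding channel-comparison lemma (applied with the capacity-matched BSC $\cV_E$), the Markov/union-bound argument of Theorem~\ref{thm: wtc} via Lemma~\ref{lem:sm_to_proj} and Theorem~\ref{thm: smoothing}, and the formula $C_s(\cW_B,\cW_E)=C(\cW_B)-C(\cW_E)$ from \cite{van1997special}; your explicit tracking of the factor $2$ (yielding $4a'/\delta$ in conclusion $(3)$) is, if anything, more careful than the paper's wording. One remark: the hypothesis ``$\cW_E$ is less noisy than $\cW_B$'' is a typo inherited from the statement itself --- van Dijk's result, as the paper's introduction states correctly, requires the \emph{main} channel $\cW_B$ to be less noisy than $\cW_E$ (indeed the stated direction is incompatible with $C(\cW_E)<C(\cW_B)$) --- and your argument, like the paper's, implicitly uses the correct direction.
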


This theorem establishes our main result for LDPC codes on the BMS wiretap channel. An intriguing question is 
proving that these codes support transmission with strong secrecy. Another problem is showing that 
the $\log^2 n$ information leakage is attainable with ensembles of irregular LDPC codes that allow
efficient decoding at rates close to secrecy capacity. 

\section*{Appendix: Proof of Proposition.~\ref{prop: main}}\label{appendixA}

When proving Proposition \ref{prop: main}, we use the following two generic lemmas.
\begin{lemma}\label{lem: A1}
    Let $a_1,\dots,a_l$ be a sequence of non-negative reals and let $s \in (0,1)$. Then
    \begin{align*}
        (a_1 + \dots + a_l)^s \leq a_1^s + \dots + a_l^s.
    \end{align*}
\end{lemma}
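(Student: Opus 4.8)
The plan is to reduce the claim to the elementary one-variable fact that the map $t \mapsto t^s$ dominates the identity on the unit interval when $s \in (0,1)$. First I would dispose of the degenerate case: if $a_1 + \dots + a_l = 0$, then every $a_i = 0$ and both sides vanish, so the inequality holds trivially.

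Assuming $S := a_1 + \dots + a_l > 0$, the main step is to normalize by setting $t_i := a_i/S \in [0,1]$, so that $\sum_i t_i = 1$. The key pointwise estimate is that $t^s \ge t$ for every $t \in [0,1]$ and every $s \in (0,1)$; this holds because $t^s = t \cdot t^{s-1}$ and $t^{s-1} \ge 1$ whenever $0 < t \le 1$ (since the exponent $s-1$ is negative), with the boundary case $t = 0$ giving $0 = 0$. Applying this estimate to each $t_i$ and summing yields $\sum_i t_i^s \ge \sum_i t_i = 1$. Multiplying through by $S^s$ and substituting $t_i^s = a_i^s / S^s$ then recovers exactly $a_1^s + \dots + a_l^s \ge S^s = (a_1 + \dots + a_l)^s$, which is the desired bound.

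An alternative route is induction on $l$, where the inductive step collapses everything to the two-term inequality $(a+b)^s \le a^s + b^s$; this in turn follows from the same normalization by writing $x = a/(a+b)$ and checking $x^s + (1-x)^s \ge x + (1-x) = 1$ via the comparison $t^s \ge t$. I would favor the direct normalization argument above since it handles arbitrary $l$ in one stroke.

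I expect no genuine obstacle here, as this is the standard subadditivity (or concavity) property of the power function for exponents in $(0,1)$. The only points requiring any care are the boundary value $t = 0$ and the degenerate case $S = 0$, both of which are routine; the entire weight of the argument rests on the single-variable comparison $t^s \ge t$ on $[0,1]$, and the rest is bookkeeping.
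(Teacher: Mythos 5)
Your proof is correct. The paper itself disposes of this lemma in one line: it rewrites the claim as $\|a\|_1 \leq \|a\|_s$ (with $\|a\|_s = (\sum_i a_i^s)^{1/s}$, the $\ell_s$ quasinorm with respect to counting measure) and cites the monotonicity of $\ell_p$ norms in $p$. Your normalization argument --- setting $t_i = a_i/S$, using the pointwise comparison $t^s \geq t$ on $[0,1]$, summing, and rescaling by $S^s$ --- is precisely the standard proof of that monotonicity fact, so in effect you have supplied self-contained the one-variable argument that the paper delegates to a known result. What the paper's route buys is brevity; what yours buys is that the lemma stands on its own with no external reference, and your handling of the degenerate case $S=0$ and the boundary value $t=0$ is careful and complete. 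Your alternative inductive route via the two-term inequality $(a+b)^s \leq a^s + b^s$ would also work, but as you note it reduces to the same pointwise estimate, so the direct argument is the right choice.
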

\begin{proof}
    Let $a = (a_1, \dots, a_l)$. Then the above expression is equivalent to $\|a\|_1 \leq \|a\|_{s}$ which is true by the monotonicity of norms (the norms are understood with respect to the counting measure).
\end{proof}
The next lemma follows by the rearrangement inequality \cite[Sec. 10.2]{hardy1952inequalities}, which says that, given two sequences of nonnegative numbers $(a_1,\dots,a_n)$ and $(b_1,\dots,b_n)$, the sum of permuted products $\sum_{i}a_{\tau(i)}b_{\sigma(i)}$ is maximized
when both sequences are arranged in a monotone (nondecreasing or nonincreasing) order.
\begin{lemma}\label{lem: A2}
    Let $s>0$ and let $a_1,\dots,a_l$ be a sequence of non-negative reals. Let $b_1,\dots,b_l$ be a permutation of $a_1^s,\dots,a_l^s$. Then 
    \begin{align*}
        \sum_{i=1}^l a_ib_i \leq \sum_{i=1}^l a_i^{s+1}.
    \end{align*}
\end{lemma}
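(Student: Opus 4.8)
The plan is to invoke the rearrangement inequality, which is already recalled in the sentence preceding the statement. The single conceptual ingredient beyond that inequality is the observation that, since $s>0$, the map $x\mapsto x^s$ is monotone nondecreasing on $[0,\infty)$. Consequently the sequence $(a_i)$ and the sequence of its powers $(a_i^s)$ are \emph{similarly ordered}: any reindexing that sorts the $a_i$ also sorts the $a_i^s$ in the same direction.

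First I would fix a permutation $\pi$ of $\{1,\dots,l\}$ for which $a_{\pi(1)}\le a_{\pi(2)}\le\dots\le a_{\pi(l)}$. By monotonicity of $x\mapsto x^s$ this forces $a_{\pi(1)}^s\le a_{\pi(2)}^s\le\dots\le a_{\pi(l)}^s$, so the two relabeled sequences $(a_{\pi(i)})_i$ and $(a_{\pi(i)}^s)_i$ are both nondecreasing.

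Next I would identify the right-hand side as the ``aligned'' pairing. Since the sum $\sum_i a_i^{s+1}=\sum_i a_i\cdot a_i^s$ is unchanged by relabeling the index set, it equals $\sum_i a_{\pi(i)}\cdot a_{\pi(i)}^s$, i.e. the sum of pointwise products of the two nondecreasing sequences above. On the other hand, writing $\sum_i a_i b_i=\sum_i a_{\pi(i)} b_{\pi(i)}$ and noting that $(b_{\pi(i)})_i$ is still a permutation of $(a_{\pi(j)}^s)_j$, the left-hand side is the value of the same collection of factors under some (possibly different) pairing. The rearrangement inequality then gives precisely that, among all permutations, the pairing that arranges both factor sequences in the same monotone order is the largest, so $\sum_i a_{\pi(i)} b_{\pi(i)}\le \sum_i a_{\pi(i)}\cdot a_{\pi(i)}^s$, which is the claim.

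I do not anticipate a genuine obstacle here; the statement is essentially the special case of the rearrangement inequality in which the second sequence is produced from the first by a monotone map. The only point requiring minor care is the bookkeeping of permutations, namely checking that after relabeling by $\pi$ the values $b_{\pi(i)}$ still range over exactly the multiset $\{a_j^s\}_j$; this is immediate since $\pi$ merely permutes indices. Because all quantities involved are nonnegative, no sign issues arise when comparing the products.
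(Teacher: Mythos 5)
Your proof is correct and takes the same route as the paper, which simply derives the lemma from the rearrangement inequality stated just before it, with the key point being that $(a_i)$ and $(a_i^s)$ are similarly ordered since $x\mapsto x^s$ is nondecreasing for $s>0$. Your write-up merely makes explicit the permutation bookkeeping that the paper leaves implicit.
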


We proceed to prove Proposition~\ref{prop: main}:
\begin{align}
    & 2^{(\alpha-1)D_{\alpha}(P_{\bfH \bfV, \bfH}\|P_{\bfU_m}P_\bfH)} = \hspace*{-.2in} \sum_{u \in \ff_2^m,H \in \mathcal H}\hspace*{-.05in}\frac{P_{\bfH \bfV, \bfH}(u,H)^\alpha}{P_{\bfU_m}(u)^{\alpha-1}P_\bfH(H)^{\alpha-1}}\nonumber\\
    &= 2^{m(\alpha-1)}\sum_{H}P_\bfH(H)\sum_{u}(P_{\bfH \bfV|\bfH}(u|H))^{\alpha} \nonumber\\
    &= 2^{m(\alpha-1)}\sum_{H}P_\bfH(H)\sum_{u}\Big(\sum_{v \in \ff_2^n}P_{\bfH \bfV, \bfV|\bfH}(u,v|H)\Big)^{\alpha}\nonumber\\
    &= 2^{m(\alpha-1)}\sum_{H}P_\bfH(H)\sum_{u}\sum_{v}P_{\bfH \bfV, \bfV|\bfH}(u,v|H) \Big(\sum_{v' \in \ff_2^n}P_{\bfH \bfV,\bfV|\bfH  }(u,v'|H)\Big)^{\alpha-1}\nonumber\\
    &= 2^{m(\alpha-1)}\sum_{H}P_\bfH(H)\sum_{u}\sum_{v}P_{\bfH \bfV, \bfV|\bfH}(u,v|H) \Big(\hspace*{-.1in}\sum_{v' \in \Gamma(v,t)}\hspace*{-.1in}P_{\bfH \bfV,\bfV|\bfH  }(u,v'|H) 
    +  \hspace*{-.2in}\sum_{v' \in \ff_2^n \setminus \Gamma(v,t)} \hspace*{-.2in}P_{\bfH \bfV,\bfV|\bfH  }(u,v'|H)\Big)^{\alpha-1}\nonumber\\
    &\le 2^{m(\alpha-1)}\sum_{H}P_\bfH(H)\sum_{u}\sum_{v}P_{\bfH \bfV, \bfV|\bfH}(u,v|H)\Big(\sum_{v' \in \Gamma(v,t)}P_{\bfH \bfV,\bfV|\bfH  }(u,v'|H)\Big)^{\alpha-1} \nonumber\\
    &\hspace*{.3in}+  2^{m(\alpha-1)}\sum_{H}P_\bfH(H)\sum_{u}\sum_{v}P_{\bfH \bfV, \bfV|\bfH}(u,v|H) \Big(\sum_{v' \in \ff_2^n \setminus \Gamma(v,t)}P_{\bfH \bfV,\bfV|\bfH  }(u,v'|H)\Big)^{\alpha-1},
    \label{eq:TwoTerms}
\end{align}
where the inequality on the last line follows from Lemma \ref{lem: A1}.

We shall bound the two terms on the right-hand side of \eqref{eq:TwoTerms} separately.
Bounding the first term, 
\begin{align*}
     2^{m(\alpha-1)}&\sum_{H}P_\bfH(H)\sum_{u}\sum_{v}P_{\bfH \bfV, \bfV|\bfH}(u,v|H)\Big(\sum_{v' \in \Gamma(v,t)}P_{\bfH \bfV,\bfV|\bfH  }(u,v'|H)\Big)^{\alpha-1} \\
    &= 2^{m(\alpha-1)}\sum_{H}P_\bfH(H)\sum_{u}\sum_{v}P_{\bfH \bfV, \bfV|\bfH}(u,v|H)\Big(\sum_{e \in \Gamma(0,t)}P_{\bfH \bfV, \bfV|\bfH}(u,v+e|H)\Big)^{\alpha-1} \\
    &\leq  2^{m(\alpha-1)}\sum_{H}P_\bfH(H)\sum_{u}\sum_{v}P_{\bfH \bfV, \bfV|\bfH}(u,v|H)\sum_{e \in \Gamma(0,t)}P_{\bfH \bfV, \bfV|\bfH}(u,v+e|H)^{\alpha-1} \hspace*{.2in} (\text{Lemma \ref{lem: A1}})\\
    &= 2^{m(\alpha-1)}\sum_{H}P_\bfH(H)\sum_{u}\sum_{e \in \Gamma(0,t)}\sum_{v}P_{\bfH \bfV, \bfV|\bfH}(u,v|H)P_{\bfH \bfV, \bfV|\bfH}(u,v+e|H)^{\alpha-1} \\
    &\leq  2\nu_t 2^{m(\alpha-1)}\sum_{H}P_\bfH(H)\sum_{u}\sum_{v}P_{\bfH \bfV, \bfV|\bfH}(u,v|H)^\alpha \quad (\text{Lemma \ref{lem: A2}})\\
     &= 2\nu_t 2^{m(\alpha-1)}\sum_{H}P_\bfH(H)\sum_{u}\sum_{v}P_\bfV(v)^{\alpha}P_{\bfH \bfV|\bfH, \bfV}(u|H,v )^{\alpha}\\
    &= 2\nu_t 2^{m(\alpha-1)}\sum_{v}P_\bfV(v)^{\alpha}\sum_{H}P_\bfH(H)\sum_{u}\1\{H v=u\} \\
    &= 2\nu_t 2^{m(\alpha-1)}\sum_{v}P_\bfV(v)^{\alpha}\sum_{H}P_\bfH(H) \\
    &= 2\nu_t 2^{m(\alpha-1)}\sum_{v}P_\bfV(v)^{\alpha}\\
    &= 2\nu_t 2^{(\alpha-1)(m-H_\alpha(\bfV))}.
\end{align*}

Bounding the second term in \eqref{eq:TwoTerms}: 
\begin{align*}
    2^{m(\alpha-1)}&\sum_{H}P_\bfH(H)\sum_{u}\sum_{v}P_{\bfH \bfV, \bfV|\bfH}(u,v|H)\Big(\sum_{v' \in \ff_2^n \setminus \Gamma(v,t)}P_{\bfH \bfV,\bfV|\bfH  }(u,v'|H)\Big)^{\alpha-1}\\
    &\le 2^{m(\alpha-1)}\Big(\sum_{H}P_\bfH(H)\sum_{u}\sum_{v}P_{\bfH \bfV, \bfV|\bfH}(u,v|H)\sum_{v' \in \ff_2^n \setminus \Gamma(v,t)}P_{\bfH \bfV,\bfV|\bfH  }(u,v'|H)\Big)^{\alpha-1} \qquad\text{(Jensen's)}\\
    &= 2^{m(\alpha-1)}\Big(\sum_{v}\sum_{v' \in \ff_2^n \setminus \Gamma(v,t)} P_\bfV(v)P_\bfV(v')\sum_{u}\sum_{H}P_\bfH(H)P_{\bfH \bfV|\bfH \bfV}(u|H, v )P_{\bfH \bfV|\bfH, \bfV }(u|H, v')\Big)^{\alpha-1}\\
    &= 2^{m(\alpha-1)}\Big(\sum_{v}\sum_{v' \in \ff_2^n \setminus \Gamma(v,t)} P_\bfV(v)P_\bfV(v')\sum_{u}\sum_{H}P_\bfH(H)\1\{H v= H v'=u\}\Big)^{\alpha-1}\\
    &= 2^{m(\alpha-1)}\Big(\sum_{v}\sum_{v' \in \ff_2^n \setminus \Gamma(v,t)} P_\bfV(v)P_\bfV(v')\sum_{u}\Pr(\bfH v= \bfH v' =u)\Big)^{\alpha-1}\\
    &= 2^{m(\alpha-1)}\Big(\sum_{v}\sum_{v' \in \ff_2^n \setminus \Gamma(v,t)} P_\bfV(v)P_\bfV(v')\Pr(\bfH v=\bfH v')\Big)^{\alpha-1}\\
    &= \Big(2^m\sum_{v}\sum_{v' \in \ff_2^n \setminus \Gamma(v,t)} P_\bfV(v)P_\bfV(v')\Pr(\bfH (v-v') = 0)\Big)^{\alpha-1}\\
    &= \Big(2^m\sum_{v}\sum_{v'' \in \ff_2^n \setminus \Gamma(0,t)} P_\bfV(v)P_\bfV(v+v'')\Pr(\bfH v'' = 0)\Big)^{\alpha-1}\\
    &= \Big(2^m\sum_{v'' \in \ff_2^n \setminus \Gamma(0,t)} \Pr(\bfH v'' = 0) \sum_{v}P_\bfV(v)P_\bfV(v+v'')\Big)^{\alpha-1}\\
    &= \Big(2^m\sum_{v'' \in \ff_2^n \setminus \Gamma(0,t)} \Pr(\bfH v'' = 0) P_\bfV \ast P_\bfV(v'')\Big)^{\alpha-1}.
\end{align*}

Combining the two inequalities just derived, we obtain the bound \eqref{eq: main}.

\bibliographystyle{abbrv}
\bibliography{smoothing}
\end{document}